\documentclass[11pt,letterpaper]{article}
\usepackage{amsthm,amsmath,amssymb,amsfonts} 
\usepackage{epsfig} 
\usepackage{latexsym,nicefrac,bbm}
\usepackage{xspace}
\usepackage{color,fancybox,graphicx,subfigure,fullpage}
\usepackage[top=1in, bottom=1in, left=1in, right=1in]{geometry}
\usepackage{tabularx}
\usepackage{hyperref} 
\usepackage{pdfsync}
\usepackage{multicol}
\usepackage{cite,cleveref}

\renewcommand{\epsilon}{\varepsilon}

\newcommand{\argmin}{\operatornamewithlimits{argmin}}

\newcommand{\eps}{\varepsilon}

\usepackage{epsfig}
\usepackage{verbatim}

\theoremstyle{definition}

\usepackage{algorithmicx}
\usepackage{algorithm,caption}
\usepackage{algpseudocode}

\clubpenalty=10000
\widowpenalty = 10000

\usepackage{mhequ}
\def \be{\begin{equs}}
\def \ee{\end{equs}}

\newtheorem{theorem}{Theorem}[section]
\newtheorem{lemma}[theorem]{Lemma}
\newtheorem{definition}[theorem]{Definition}

\newtheorem{corollary}[theorem]{Corollary}

\newtheorem{proposition}[theorem]{Proposition}

\newtheorem*{theorem*}{Theorem}

\crefname{theorem}{Theorem}{Theorems}
\crefname{observation}{Observation}{Observations}
\crefname{proposition}{Proposition}{Propositions}
\crefname{claim}{Claim}{Claims}
\crefname{condition}{Condition}{Conditions}
\crefname{example}{Example}{Examples}
\crefname{fact}{Fact}{Facts}
\crefname{lemma}{Lemma}{Lemmas}
\crefname{corollary}{Corollary}{Corollaries}
\crefname{definition}{Definition}{Definitions}
\crefname{remark}{Remark}{Remarks}
\crefname{construction}{Construction}{Constructions}

\title{\bf On Geodesically Convex Formulations for the Brascamp-Lieb Constant}
\author{Nisheeth K. Vishnoi and Ozan Y\i ld\i z \\ {\'{E}cole Polytechnique F\'{e}d\'{e}rale de Lausanne (EPFL), Switzerland}}

\usepackage{dsfont}
\usepackage{mathtools}
\usepackage{graphicx}
\newcommand{\B}{\mathit{B}}
\newcommand{\p}{\mathit{p}}
\newcommand{\R}{\mathbb{R}}
\newcommand{\Z}{\mathbb{Z}}

\newcommand{\Sn}{\mathbb{S}}

\newcommand{\BL}{\mathrm{BL}}

\newcommand{\tr}{\mathrm{Tr}}

\DeclarePairedDelimiterX{\Set}[2]\{\}{%
  \, #1 \;\delimsize\vert\; #2 \,
}

\begin{document}

\maketitle

\begin{abstract}
We consider  two  non-convex formulations for computing the optimal constant in the Brascamp-Lieb inequality corresponding to a given datum, and show that they are geodesically log-concave on the manifold of positive definite matrices endowed with the Riemannian metric corresponding to the Hessian of the log-determinant function.
The first formulation is present in the work of  Lieb \cite{lieb1990gaussian} and the second is inspired by the work of Bennett {\em et al.} \cite{bennett2008brascamp}.
Recent works of Garg {\em et al.} \cite{garg2016algorithmic} and Allen-Zhu {\em et al.} \cite{OS2018} also imply a geodesically log-concave formulation of the Brascamp-Lieb constant through a reduction to the operator scaling problem.
However, the dimension of the arising optimization problem in their reduction depends exponentially on the number of bits needed to describe the Brascamp-Lieb datum. 
The formulations presented here have dimensions that are polynomial in the bit complexity of the input datum.

\end{abstract}  
  \thispagestyle{empty}

\section{Introduction}

\paragraph{The Brascamp-Lieb Inequality.}

Brascamp and Lieb  \cite{brascamp1976best} presented a class of inequalities that generalize many well-known inequalities  and, as a consequence, have played an important role in various mathematical disciplines.
Formally, they presented the following class of inequalities where each inequality is described by a ``datum'', referred to as the Brascamp-Lieb datum.

\begin{definition}[The Brascamp-Lieb Inequality, Datum, Constant]
Let $n$, $m$, and $(n_j)_{j\in[m]}$ be positive integers and $\p:=(p_j)_{j\in[m]}$ be non-negative real numbers.
Let $\B:=(B_j)_{j\in[m]}$ be an $m$-tuple of linear transformations where $B_j$ is a surjective linear transformation from $\R^{n}$ to $\R^{n_j}$.
The corresponding  Brascamp-Lieb {\em datum} denoted by $(\B,\p)$.
The Brascamp-Lieb inequality states that for each Brascamp-Lieb datum $(\B,\p)$ there exists a constant $C(\B,\p)$ not necessarily finite, such that for any selection of real-valued, non-negative, Lebesgue measurable functions $f_j$ where $f_j:\R^{n_j}\rightarrow \R$,
\begin{equ}
\int_{x\in\R^n} \left(\prod_{j\in[m]} f_j(B_j x)^{p_j}\right)dx
\leq
C(\B,\p) \prod_{j\in[m]}\left(\int_{x\in \R^{n_j}} f_j(x) dx \right)^{p_j}.
\label{eq:BLinequality}
\end{equ}
The smallest constant that satisfies~\eqref{eq:BLinequality} for any choice of $f:=(f_j)_{j\in[m]}$ satisfying the properties mentioned above is called the Brascamp-Lieb {\em constant} and denoted by $\BL(\B,\p)$. 
A Brascamp-Lieb datum $(\B,\p)$ is called \emph{feasible} if $\BL(\B,\p)$ is finite, otherwise, it is called \emph{infeasible}.
For a given $m$-tuple $B$, the set of real vectors $\p$ such that $(\B,\p)$ is feasible is denoted by $P_{\B}$.
\label{def:BL}
\end{definition}

\noindent
Applications of the Brascamp-Lieb inequality extend beyond  functional analysis and appear in  convex geometry~\cite{ball1989volumes}, information theory~\cite{carlen2009subadditivity},\cite{LiuCCV16},\cite{LiuCCV17}, machine learning~\cite{hardt2013algorithms}, and theoretical computer science \cite{DSW14, DvirGOS16}.

\paragraph{Mathematical Aspects of the Brascamp-Lieb Inequality.}

\noindent
A Brascamp-Lieb inequality is non-trivial only when $(\B,\p)$ is a feasible Brascamp-Lieb datum.
Therefore, it is of interest to  characterize feasible Brascamp-Lieb data and compute the corresponding Brascamp-Lieb constant.
Lieb~\cite{lieb1990gaussian} showed that one needs to consider only Gaussian functions as inputs for~\eqref{eq:BLinequality}.
This result suggests the following characterization of the Brascamp-Lieb constant as an optimization problem.
For a positive integer $k$, let $\Sn^{k}_+$ be the space of real-valued, symmetric, positive semi-definite (PSD) matrices of dimension $k\times k$. 
\begin{theorem}[Gaussian maximizers \cite{lieb1990gaussian}]
Let $(\B,\p)$ be a Brascamp-Lieb  datum with $B_j\in\R^{n_j\times n}$ for each $j\in[m]$.
Let $A:=(A_j)_{j\in[m]}$ with $A_j\in\Sn^{n_j}_+$, and consider the function
\begin{equ}[eq:BLconstant]
\BL(\B,\p;A) := \left(\frac{\prod\limits_{j\in[m]}\det(A_j)^{p_j}}{\det(\sum\limits_{j\in[m]}p_jB_j^\top A_jB_j)}\right)^{1/2}.
\end{equ}
Then, the Brascamp-Lieb constant for $(\B,\p)$, $\BL(\B,\p)$ is equal to   $\sup\limits_{A\in \bigtimes\limits_{j\in[m]} \Sn^{n_j}_+} \BL(\B,\p;A)$.
\label{thm:BLconstant}
\end{theorem}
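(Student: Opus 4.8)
The claimed equality splits into two inequalities of very different character. Throughout, write $M(A) := \sum_{j\in[m]} p_j B_j^\top A_j B_j$, and recall the Gaussian integral $\int_{\R^k}\exp(-\tfrac12 x^\top S x)\,dx = (2\pi)^{k/2}\det(S)^{-1/2}$ for $S\in\Sn^k_+$ positive definite.

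\emph{Gaussians give admissible lower bounds} ($\BL(\B,\p)\ge \sup_A \BL(\B,\p;A)$). First I would substitute the centered Gaussians $f_j(y)=\exp(-\tfrac12 y^\top A_j y)$, with each $A_j\succ 0$, into \eqref{eq:BLinequality}. The right-hand side integrals evaluate to $\prod_{j\in[m]}\big((2\pi)^{n_j/2}\det(A_j)^{-1/2}\big)^{p_j}$, while the exponent on the left collapses to $-\tfrac12 x^\top M(A)x$, so the left-hand integral equals $(2\pi)^{n/2}\det(M(A))^{-1/2}$ provided $M(A)\succ 0$ (which holds exactly when $\bigcap_{j\in[m]}\ker B_j=\{0\}$; otherwise the datum is infeasible and both sides are $+\infty$). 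Rearranging \eqref{eq:BLinequality} then gives
\[
(2\pi)^{(n-\sum_{j\in[m]} p_j n_j)/2}\,\BL(\B,\p;A)\ \le\ C(\B,\p)
\]
for every admissible constant $C(\B,\p)$. The scalar prefactor is neutralized by the homogeneity $\BL(\B,\p;tA)=t^{(\sum_{j\in[m]} p_j n_j-n)/2}\BL(\B,\p;A)$: when $\sum_{j\in[m]} p_j n_j=n$ it equals $1$, and when $\sum_{j\in[m]} p_j n_j\ne n$ the same homogeneity forces $\sup_A\BL(\B,\p;A)=+\infty$, matching $\BL(\B,\p)=+\infty$. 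Taking the infimum over admissible $C(\B,\p)$ and the supremum over $A\succ 0$, then extending to the PSD boundary by continuity, yields this direction.

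\emph{Gaussians are optimal} ($\BL(\B,\p)\le \sup_A \BL(\B,\p;A)$). This is the substantive content of the theorem and the step I expect to be the main obstacle, since it asserts that no measurable tuple $f_j$ can beat the Gaussians. The route I would take is the heat-semigroup monotonicity argument. Normalize arbitrary nonnegative $f_j\in L^1(\R^{n_j})$ to unit mass, evolve each by the heat flow $u_j(t,\cdot)=e^{t\Delta}f_j$, and track
\[
\Phi(t) := \int_{\R^n}\prod_{j\in[m]} u_j(t,B_j x)^{p_j}\,dx .
\]
Differentiating in $t$, using $\partial_t u_j=\Delta u_j$ and integration by parts, I would reduce the sign of $\tfrac{d}{dt}\log\Phi(t)$ to a pointwise convexity (Cauchy--Schwarz type) inequality, establishing that the scale-invariant Brascamp--Lieb ratio does not decrease along the flow. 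Since $u_j(t,\cdot)$, rescaled by $\sqrt t$, converges to a Gaussian whose covariance is fixed by the second moments of $f_j$, the limit of $\Phi$ is exactly a value $\BL(\B,\p;A)$; monotonicity then bounds the original ratio by this Gaussian value, and hence by $\sup_A\BL(\B,\p;A)$. The delicate points are justifying the differentiation and integration by parts under the integral sign, the uniform integrability needed to pass to the $t\to\infty$ Gaussian limit, and verifying the pointwise inequality that supplies the correct sign of the derivative.

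\emph{Alternative and remarks.} An alternative to the heat-flow argument is Lieb's original tensorization-plus-central-limit scheme: one first proves multiplicativity of $\BL$ under the $N$-fold direct-sum datum $(B_j^{\oplus N})_{j\in[m]}$, passes to many independent copies, applies a rotation mixing the copies, and invokes a central limit theorem so that near-extremizers become asymptotically Gaussian. Either way the crux is the same Gaussian-optimality phenomenon. I would also record at the outset that the scaling identity $\sum_{j\in[m]} p_j n_j=n$ and the no-common-kernel condition $\bigcap_{j\in[m]}\ker B_j=\{0\}$ are precisely the degenerate cases to isolate first, so that in the main argument one may assume $M(A)\succ 0$ and exploit scale invariance of the ratio.
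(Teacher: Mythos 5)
First, a point of comparison: the paper does not prove \cref{thm:BLconstant} at all --- it is Lieb's theorem, imported from \cite{lieb1990gaussian}, and the paper's contributions (\cref{thm:Lieb} and \cref{thm:new_formulation}) take it as a black box. So there is no internal proof to measure your proposal against; it must stand on its own.

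Your first direction is complete and correct: substituting the centered Gaussians $f_j(y)=\exp(-\tfrac12 y^\top A_j y)$ into \eqref{eq:BLinequality}, evaluating both sides, and using the homogeneity $\BL(\B,\p;tA)=t^{(\sum_{j} p_j n_j - n)/2}\,\BL(\B,\p;A)$ to dispose of the $(2\pi)$ prefactor and the degenerate scaling case is the standard argument, and your bookkeeping is right. The genuine gap is the second direction, which is the entire substance of the theorem. What you give there is a map of known proof strategies (heat-semigroup monotonicity in the style of Carlen--Lieb--Loss and Bennett--Carbery--Christ--Tao, or Lieb's original tensorization-plus-central-limit scheme), not a proof: the step where $\tfrac{d}{dt}\log\Phi(t)$ is shown to have the correct sign --- the ``pointwise convexity (Cauchy--Schwarz type) inequality'' you defer --- is exactly where all the difficulty lives, and you neither state it nor establish it. Your outline also leaves real technical obstructions untouched: arbitrary $f_j\in L^1$ need not have finite second moments, so the rescaled heat evolution need not converge to a Gaussian with ``covariance fixed by the second moments of $f_j$'' without a truncation and approximation step; and identifying the $t\to\infty$ limit of $\Phi$ with a value $\BL(\B,\p;A)$ requires care when the datum is not simple, since extremizers may fail to exist and the supremum may be approached only along degenerating $A$. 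As it stands, your proposal proves the easy inequality and correctly catalogs how the hard one is proved in the literature; citing the hard direction --- which is precisely what the paper does --- would be legitimate, but as a self-contained proof the attempt is incomplete.
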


%
\noindent
Bennett {\em et al.}~\cite{bennett2008brascamp} analyzed the following necessary and sufficient conditions for the feasibility of a Brascamp-Lieb datum by extending  Lieb's work.

\begin{theorem}[Feasibility of Brascamp-Lieb Datum \cite{bennett2008brascamp}, Theorem 1.15]
Let $(\B,\p)$ be a Brascamp-Lieb  datum with $B_j\in\R^{n_j\times n}$ for each $j\in[m]$. 
Then, $(\B,\p)$ is feasible if and only if following conditions hold:
\begin{enumerate}
	\item $n=\sum_{j\in[m]}p_jn_j$, and
	\item $\dim(V)\leq \sum_{j\in[m]} p_j \dim(B_j V)$ for any subspace $V$ of $\R^n$.
\end{enumerate}
\label{thm:feasibility}
\end{theorem}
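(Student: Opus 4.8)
The plan is to read both conditions off Lieb's variational formula (\Cref{thm:BLconstant}): since $\BL(\B,\p)=\sup_{A}\BL(\B,\p;A)$, feasibility is equivalent to the boundedness of the explicit ratio in \eqref{eq:BLconstant} over all tuples $A=(A_j)_{j\in[m]}$ of positive definite matrices. Writing $M(A):=\sum_{j\in[m]}p_jB_j^\top A_jB_j$ and $G(A):=\log\BL(\B,\p;A)^2=\sum_{j}p_j\log\det A_j-\log\det M(A)$, the goal becomes to show that $\sup_A G(A)<\infty$ if and only if (1) and (2) hold. I would prove necessity by exhibiting explicit degenerating families of $A_j$, and sufficiency by a compactness/coercivity argument driven by (2), organized as an induction on $n$.

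\emph{Necessity.} For (1), note the homogeneity $G(tA)=G(A)+\big(\sum_j p_jn_j-n\big)\log t$, since scaling every $A_j$ by $t>0$ multiplies $\prod_j\det(A_j)^{p_j}$ by $t^{\sum_jp_jn_j}$ and $\det M(A)$ by $t^n$. If $\sum_jp_jn_j\neq n$, letting $t\to\infty$ or $t\to0$ sends $G\to+\infty$, so finiteness forces (1). For (2), fix a subspace $V$ with $d:=\dim V$ and $d_j:=\dim(B_jV)$, let $\Pi_j$ be the orthogonal projection of $\R^{n_j}$ onto $B_jV$, and set $A_j(s):=s\,\Pi_j+(I-\Pi_j)$ for $s>0$, so that $\det A_j(s)=s^{d_j}$. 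For $x\in V$ one has $B_jx\in B_jV$, hence $x^\top M(A(s))x=s\sum_jp_j\|B_jx\|^2$; a block/Schur-complement computation then gives $\det M(A(s))=\Theta(s^{d})$ as $s\to0^+$. Therefore $\BL(\B,\p;A(s))^2=\Theta\!\big(s^{\sum_jp_jd_j-d}\big)$, which diverges as $s\to0^+$ whenever $\sum_jp_jd_j<d$; hence feasibility forces (2). (The degenerate cases where the relevant forms are singular correspond to $V\subseteq\bigcap_j\ker B_j$, where (2) already fails and the constant is trivially infinite.)

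\emph{Sufficiency.} Assuming (1) and (2), I would show $\sup_A G(A)<\infty$ by induction on $n$. Using (1), the global scaling symmetry lets me restrict to the slice $\{\det M(A)=1\}$, on which $G(A)=\sum_jp_j\log\det A_j$. The dichotomy is whether some nonzero proper subspace $V$ makes (2) tight, i.e.\ $\dim V=\sum_jp_j\dim B_jV$ (a \emph{critical} subspace). If no such $V$ exists, then every inequality in (2) is strict, and the scaling analysis above shows that $G$ tends to $-\infty$ along every degenerating direction; this coercivity makes the supremum a maximum, hence finite. If a critical $V$ exists, I would factor the datum along $V$: the restrictions $B_j|_V\colon V\to B_jV$ and the induced quotient maps $\bar B_j\colon \R^n/V\to\R^{n_j}/B_jV$ each form a datum of strictly smaller domain dimension, and criticality of $V$ is exactly what is needed to verify that both inherit conditions (1) and (2). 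A Fubini-type splitting of the integral in \eqref{eq:BLinequality} along $V$ then yields $\BL(\B,\p)\le\BL(\B|_V,\p)\cdot\BL(\bar\B,\p)$, and both factors are finite by the inductive hypothesis.

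The main obstacle is the sufficiency direction, and within it two points deserve care. First, turning ``strict (2)'' into genuine coercivity of $G$ requires controlling arbitrary degenerations of the $A_j$ --- where several eigenvalue scales may blow up simultaneously --- by reducing them to a finite family of subspaces (a flag), so that the recession exponents are all of the form $\sum_jp_j\dim B_jW-\dim W<0$. Second, the factorization step must be justified carefully: one must check that the quotient maps are again surjective, that conditions (1) and (2) descend to both pieces using the equality for $V$, and that the integral genuinely decouples. I expect the factorization lemma to be the crux.
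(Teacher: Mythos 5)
First, a framing remark: the paper does not prove this theorem at all --- it is quoted from Bennett, Carbery, Christ, and Tao \cite{bennett2008brascamp} (their Theorem 1.15) and used as a black box --- so there is no in-paper proof to compare against. Measured against the known argument of \cite{bennett2008brascamp}, your outline follows essentially the same strategy, and your \emph{necessity} half is correct and essentially complete: the homogeneity computation forces condition (1), and the subspace-adapted Gaussians $A_j(s)=s\,\Pi_j+(I-\Pi_j)$ are exactly the right test objects for condition (2). One small inaccuracy there: writing $M(A(s))=M_0+sM_1$ with $M_0:=\sum_j p_jB_j^\top(I-\Pi_j)B_j$, the kernel of $M_0$ is $\{x: B_jx\in B_jV \text{ for all } j\}$, which may strictly contain $V$, so $\det M(A(s))$ vanishes to order $s^k$ with $k\ge d$ rather than $\Theta(s^d)$; since the resulting exponent $\sum_j p_jd_j-k\le\sum_j p_jd_j-d<0$ is still negative, the divergence only gets stronger and your conclusion stands.

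The \emph{sufficiency} half, however, is a roadmap rather than a proof, and the two steps you yourself flag as delicate are precisely where the entire analytic content of the theorem lives. (a) In the simple case, the claim that strictness of all inequalities in (2) yields coercivity of $G$ on the slice $\{\det M(A)=1\}$ is essentially the existence-of-Gaussian-maximizers theorem (compare \cref{thm:EquivalencetoGeometric}); an arbitrary degenerating sequence of tuples $(A_j)$ can have eigenvalues blowing up and collapsing at several incommensurate rates in directions not aligned with any fixed subspace, and reducing such degenerations ``to a finite flag'' with recession exponents $\sum_j p_j\dim B_jW-\dim W$ is an assertion, not an argument --- in \cite{bennett2008brascamp} this step requires its own careful induction and is the hardest part of the paper. (b) The factorization $\BL(\B,\p)\le\BL(\B|_V,\p)\cdot\BL(\bar\B,\p)$ across a critical subspace needs a genuine proof: one must define the quotient functions by integrating each $f_j$ over the fibers of $\R^{n_j}\to\R^{n_j}/B_jV$, and the Fubini computation closes only because criticality makes the exponents match; none of this is routine. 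By contrast, the scaffolding you describe around these two lemmas --- the induction on $n$, surjectivity of the quotient maps, and the descent of conditions (1) and (2) to the restricted and quotient data via the criticality equality --- is sound and straightforward to verify. So: right strategy and a complete necessity direction, but the two lemmas carrying the weight of sufficiency are assumed rather than proved, which is a genuine gap.
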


\noindent
\cref{thm:feasibility} introduces infinitely many linear restrictions on $\p$ while $V$ varies over different subspaces of $\R^n$.
However, there are only finitely many different linear restrictions as $\dim(B_j V)$ can only take integer values from $[n_j]$.
Consequently, this theorem implies that $P_{\B}$ is a convex set and, in particular a polytope. 
It is referred to as the Brascamp-Lieb polytope \cite{valdimarsson2010brascamp}.
Some of the above inequality constraints are tight for any $\p\in P_{\B}$ such as the inequality constraints induced by $\R^n$ and the trivial subspace, while others can be strict for some $p\in P_{\B}$.
If $\p$ lies on the boundary of $P_{\B}$, then there should be some non-trivial subspaces $V$ such that the induced inequality constraints are tight for $\p$.
This  leads to the definition of critical subspaces and simple Brascamp-Lieb datums.
%

\begin{definition}[Critical Subspaces and Simple Brascamp-Lieb Data]
Let $(\B,\p)$ be a feasible Brascamp-Lieb  datum with $B_j\in\R^{n_j\times n}$ for each $j\in[m]$. 
Then, a subspace $V$ of $\R^n$ is called \emph{critical} if 
\begin{equ}
\dim(V)=\sum_{j\in[m]}p_j\dim(B_jV).
\end{equ}
$(\B,\p)$ is called \emph{simple} if there is no non-trivial proper subspace of $\R^n$ which is critical.
\end{definition}

\noindent
For a fixed $B$, simple Brascamp-Lieb data correspond to points $p$ that lie in the relative interior of the Brascamp-Lieb polytope $P_{\B}$.
One important property of simple Brascamp-Lieb data is that there exists a maximizer for $\BL(\B,\p;A)$.
This was proved  by Bennett {\em et al.}~\cite{bennett2008brascamp} by analyzing Lieb's formulation~\eqref{eq:BLconstant}.
This analysis also leads to a characterization of maximizers of $\BL(\B,\p;A)$.

\begin{theorem}[Characterization of Maximizers\cite{bennett2008brascamp}, Theorem 7.13]
Let $(\B,\p)$ be a Brascamp-Lieb  datum with $B_j\in\R^{n_j\times n}$  and  $p_j>0$ for all $j\in[m]$.
Let $A:=(A_j)_{j\in[m]}$ be an $m$-tuple of positive semidefinite matrices with $A_j\in\R^{n_j\times n_j}$ and let $M:=\sum_{j\in[m]}p_j B_j^\top A_jB_j$.
Then, the following statements are equivalent,
\begin{enumerate}
	\item $A$ is a global maximizer for $\BL(\B,\p;A)$ as in \cref{eq:BLconstant}.
	\item $A$ is a local maximizer for $\BL(\B,\p;A)$.
	\item $M$ is invertible, and $A_j^{-1} = B_j M^{-1} B_j^\top$ for each $j\in[m]$.
\end{enumerate}
Furthermore, the global maximizer $A$ for $\BL(\B,\p;A)$ exists and is unique up to scalar if and only if $(\B,\p)$ is simple. 
\label{thm:EquivalencetoGeometric}
\end{theorem}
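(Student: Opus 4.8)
The plan is to pass to logarithms and argue by \emph{geodesic concavity} for the affine-invariant metric (the theme of this paper). Set
\[
F(A) := 2\log\BL(\B,\p;A) = \sum_{j\in[m]} p_j\log\det A_j - \log\det M
\]
on the open domain where every $A_j\succ 0$, with $M=\sum_j p_j B_j^\top A_jB_j$. Since feasibility forces $\sum_j p_j n_j=n$ (condition 1 of \cref{thm:feasibility}), $F$ is invariant under $A\mapsto tA$, which is the source of the ``up to scalar'' clause. Maximizers are interior: at $A=(I_{n_j})_j$ one has $M=\sum_j p_jB_j^\top B_j\succ0$ (a common kernel vector would violate condition 2 of \cref{thm:feasibility} for the line it spans), so $\BL>0$ there, whereas any boundary tuple with a singular $A_j$ has value $0$ and cannot be a local maximum. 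Differentiating in the ambient coordinates gives $\partial_{A_j}F = p_jA_j^{-1}-p_jB_jM^{-1}B_j^\top$, so the equation $\nabla F=0$ is precisely $A_j^{-1}=B_jM^{-1}B_j^\top$; thus statement (3) is exactly the assertion that $A$ is a critical point of $F$.

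To turn ``critical point'' into ``global maximizer'' I would show $F$ is concave along the geodesics $A_j(t)=A_j^{1/2}e^{tX_j}A_j^{1/2}$. The first term is affine, $\log\det A_j(t)=\log\det A_j+t\tr X_j$, so concavity reduces to convexity of $t\mapsto\log\det M(t)$. Writing $R_j:=e^{tX_j/2}A_j^{1/2}B_j$, stacking $\mathcal R:=(\sqrt{p_j}\,R_j)_j$ and $\mathcal X:=\diag(X_j)_j$, one has $M=\mathcal R^\top\mathcal R$, $\dot M=\mathcal R^\top\mathcal X\mathcal R$, $\ddot M=\mathcal R^\top\mathcal X^2\mathcal R$. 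Using $\tfrac{d^2}{dt^2}\log\det M=\tr(M^{-1}\ddot M)-\tr(M^{-1}\dot M M^{-1}\dot M)$ and letting $P:=\mathcal R M^{-1}\mathcal R^\top$ be the orthogonal projection onto $\mathrm{range}(\mathcal R)$, this difference becomes $\tr(P\mathcal X^2)-\tr(P\mathcal X P\mathcal X)=\|(I-P)\mathcal X P\|_F^2\ge0$. Hence $F$ is geodesically concave, so over this (geodesically complete) domain ``critical point'', ``local maximum'', and ``global maximum'' coincide; combined with the previous paragraph this yields $(1)\Leftrightarrow(2)\Leftrightarrow(3)$.

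For the uniqueness half of the ``furthermore'', suppose two maximizers are not proportional; the geodesic joining them consists of maximizers, so $F$ is affine along it and the inequality above is an equality throughout, forcing $(I-P)\mathcal X P\equiv0$, i.e.\ $\mathcal X$ preserves $\mathrm{range}(\mathcal R)$. Tracing this invariance back through $R_j=e^{tX_j/2}A_j^{1/2}B_j$ produces a nontrivial proper subspace $V\subseteq\R^n$ with $\dim V=\sum_j p_j\dim(B_jV)$, i.e.\ a \emph{critical} subspace, so $(\B,\p)$ is not simple; conversely a critical subspace yields a non-scaling flat direction and hence a second maximizer. This gives uniqueness up to scalar exactly when $(\B,\p)$ is simple.

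For existence I would quotient out the scaling direction and prove $F$ is coercive on the slice iff $(\B,\p)$ is simple, by computing the asymptotic slope $\lim_{t\to\infty}\tfrac1tF(A(t))=\sum_j p_j\tr X_j-\lim_{t\to\infty}\tfrac1t\log\det M(t)$ along each ray. The log-determinant term has a ``tropical'' limit governed by the top eigenspaces of the $X_j$, and matching these eigenspaces to subspaces $V\subseteq\R^n$ shows that every slope is $\le0$ precisely when the inequalities of \cref{thm:feasibility} hold (feasibility), with strict negativity in every non-scaling direction precisely when they are all strict on nontrivial proper subspaces (simplicity); coercivity then gives attainment of the supremum. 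I expect this last step — the dictionary between the analytic degeneration of $F$ along geodesic rays and the combinatorial notion of a critical subspace — to be the main obstacle, whereas the concavity itself is the clean projection identity above.
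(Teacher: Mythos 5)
First, a point of reference: the paper does not prove this statement at all --- it is quoted from Bennett \emph{et al.}~\cite{bennett2008brascamp} (their Theorem 7.13) and used as a black box in \cref{sec:new_formulation} --- so your proposal has to be judged against what a complete proof requires rather than against an argument in the paper. The first half of your proposal, the equivalence $(1)\Leftrightarrow(2)\Leftrightarrow(3)$, is essentially correct and complete: the identification of statement (3) with criticality of $F(A)=\sum_j p_j\log\det A_j-\log\det M$, the computation $\tfrac{d^2}{dt^2}\log\det M(t)=\tr(P\mathcal{X}^2)-\tr(P\mathcal{X}P\mathcal{X})=\|(I-P)\mathcal{X}P\|_F^2\ge 0$ along geodesics $A_j(t)=A_j^{1/2}e^{tX_j}A_j^{1/2}$, and the resulting coincidence of critical points, local maxima, and global maxima for a geodesically concave function are all sound (modulo the glossed $0/0$ convention at boundary tuples where $\det M=0$). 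It is worth noting that this Hessian identity is an independent proof of the paper's \cref{thm:Lieb}: the paper obtains joint geodesic log-concavity from Ando's theorem (\cref{thm:plm-on-gm}) via the maximal characterization of the geometric mean (\cref{thm:gm-maximal}), whereas you get it from a direct second-derivative/projection computation; that part stands on its own.

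The genuine gap is the ``furthermore'' clause, which is precisely the substantial content of Bennett \emph{et al.}'s theorem, and your proposal leaves both of its halves as sketches. (a) For uniqueness, you must convert the flatness condition ``$\mathcal{X}=\diag(X_j)$ preserves $\mathrm{range}(\mathcal{R})$'' into a \emph{critical subspace} $V\subseteq\R^n$ with $\dim V=\sum_j p_j\dim(B_jV)$. This requires showing that the eigenspace decomposition of the block-diagonal $\mathcal{X}$ induces a decomposition of $\mathrm{range}(\mathcal{R})$ into pieces of the form $\mathcal{R}V_\lambda$ for subspaces $V_\lambda\subseteq\R^n$, and that the per-block eigenspace dimensions reproduce the numbers $\dim(B_jV_\lambda)$; none of this is routine, and the converse step (a critical subspace produces a non-scaling flat direction \emph{through a maximizer}) in Bennett \emph{et al.}\ rests on the multiplicativity of Brascamp--Lieb data across a critical subspace and its orthogonal complement, which is itself a nontrivial structural theorem. (b) For existence, the coercivity-on-the-slice argument hinges on the ``tropical'' limit $\lim_{t\to\infty}\tfrac1t\log\det M(t)$ being expressible through dimension jumps of a filtration of $\R^n$ attached to the eigenspaces of the $X_j$'s, and on the claim that all asymptotic slopes are strictly negative in non-scaling directions exactly when the inequalities of \cref{thm:feasibility} are strict on nontrivial proper subspaces --- the very dictionary you flag as ``the main obstacle.'' Until (a) and (b) are carried out, what you have is a proof of $(1)\Leftrightarrow(2)\Leftrightarrow(3)$, not of the existence/uniqueness characterization. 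A minor additional issue: your scale-invariance and interiority arguments presuppose feasibility (both conditions of \cref{thm:feasibility}), which the theorem statement does not assume; the degenerate case $\bigcap_j\ker B_j\neq\{0\}$, where $\det M\equiv 0$, needs to be disposed of separately before the equivalences can even be stated cleanly.
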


\paragraph{Computational Aspects of the Brascamp-Lieb Inequality.}
One of the computational questions concerning the Brascamp-Lieb inequality is: Given a Brascamp-Lieb datum $(B,p)$, can we compute $\BL(\B,\p)$ in polynomial time?
Since computing $\BL(\B,\p)$ exactly may not be possible due to the fact that this number may not be rational even if the datum $(\B,\p)$ is, one seeks an arbitrarily good approximation.
Formally, given the entries of $B$ and $p$ in binary, and an $\eps>0$, compute a number $Z$ such that 
$$ \BL(\B,\p) \leq Z \leq  (1+\eps) \; \BL(\B,\p)$$
in time that is polynomial in the combined bit lengths of $B$ and $p$ and $\log 1/\eps.$

There are a few obstacles to this problem: (1) Checking if a given Brascamp-Lieb datum is feasible is not known to be in {\bf P}. 
(2)  The formulation of the Brascamp-Lieb constant by Lieb~\cite{lieb1990gaussian} as in \cref{eq:BLconstant} is neither concave nor log-concave.
Thus,  techniques developed in the context of linear and convex optimization do not seem to be directly applicable.

An important step towards the computability of  $\BL(\B,\p)$ was taken recently by Garg {\em et al.} \cite{garg2016algorithmic} where they presented a pseudo-polynomial time algorithm for (1) and a pseudo-polynomial time algorithm to compute  $\BL(\B,\p)$. 
The running time of their algorithms depended polynomially on the {\em magnitude} of the denominators in the components of $p$ rather than the number of bits required to represent them.
Interestingly, Garg {\em et al.}  presented a reduction of the problem of computing $\BL(\B,\p)$ to the problem of computing the ``capacity'' in an  ``operator scaling'' problem considered by Gurvits \cite{GURVITS2004448}.
Roughly, in the operator scaling problem, given a representation of a linear mapping from PSD matrices to PSD matrices, the goal is to compute the minimum ``distortion'' of this mapping (see \cref{def:os}).
The operator scaling problem is also not a concave or log-concave optimization problem.
However,  very recently, operator scaling was shown to be ``geodesically'' log-concave by Allen-Zhu {\em et al.}~\cite{OS2018}.

Geodesic convexity is an extension of convexity along straight lines in Euclidean spaces to geodesics in Riemannian manifolds. 
Since all the problems mentioned so far are defined on positive definite matrices, the natural  manifold to consider is the space of positive definite matrices with a particular Riemannian metric: the Hessian of the $- \log \det$ function; see \cref{sec:geodesic}.
Geodesics are analogues of straight lines on a manifold and, 
roughly, a function $f$ on a Riemannian manifold is said to be geodesically convex if the average of its values at the two end points of any geodesic is at least its value at its mid-point.

The reduction of Garg {\em et al.}~\cite{garg2016algorithmic}, thus, leads to a geodesically log-concave formulation to compute $\BL(\B,\p)$.
However, their construction does not lead to an optimization  problem whose dimension is polynomial in the input bit length as the size of constructed positive operator in the operator scaling problem depends exponentially on the bit lengths of the entries of $p$. 
More precisely, if $p_j=\nicefrac{c_j}{c}$ for integers $(c_j)_{j\in[m]}$ and $c$, then the aforementioned construction results in operators over  $\Sn^{nc}_{++}$.

Besides this line of work, a polynomial time algorithm to compute  $\BL(\B,\p)$, in the special case of rank one ($n_j=1$ for all $j$), was presented by Straszak and Vishnoi  \cite{straszak2017entropy} through a connection with computing maximum entropy distributions.

\paragraph{Our Contribution.}
Our first result is  that  Lieb's formulation presented in \cref{thm:BLconstant} is jointly geodesically log-concave with respect to inputs $(A_j)_{j\in[m]}$.
\begin{theorem}[Geodesic Log-Concavity of Lieb's Formulation]\label{thm:Lieb}
Let $(\B,\p)$ be a feasible Brascamp-Lieb  datum with $B_j\in\R^{n_j\times n}$ for each $j\in[m]$.
Then, $\BL(\B,\p;A)$ is jointly geodesically log-concave with respect to $A:=(A_j)_{j\in[m]}$ where $\BL(\B,\,\p;\,A)$ is defined as~\eqref{eq:BLconstant}.
\end{theorem}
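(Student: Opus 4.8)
The plan is to pass to logarithmic coordinates and exploit the explicit form of the geodesics. Since the metric is the Hessian of $-\log\det$, the geodesic through a point $P$ in direction of a symmetric $H$ is $\gamma(t)=P^{1/2}\exp(tH)P^{1/2}$, and on the product manifold $\prod_{j\in[m]}\Sn^{n_j}_{++}$ of tuples of positive-definite matrices a geodesic is a tuple $(\gamma_1(t),\dots,\gamma_m(t))$ with $\gamma_j(t)=C_j\exp(tH_j)C_j^\top$ for invertible $C_j$ and symmetric $H_j$. Writing $M(A):=\sum_{j\in[m]}p_jB_j^\top A_jB_j$, we have $\log \BL(\B,\p;A)=\tfrac12\big(\sum_{j}p_j\log\det A_j-\log\det M(A)\big)$, and $M(A)\succ 0$ for feasible data because condition~2 of \cref{thm:feasibility} applied to $V=\bigcap_j\ker B_j$ forces $\bigcap_j\ker B_j=\{0\}$. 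Since ``jointly geodesically log-concave'' means $\log\BL$ is concave along every product geodesic, it suffices to show that $\sum_j p_j\log\det A_j$ is geodesically affine and that $A\mapsto\log\det M(A)$ is geodesically convex.

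The first term is immediate: $\det\gamma_j(t)=\det(C_jC_j^\top)\exp(t\,\tr H_j)$, so $\log\det\gamma_j(t)$ is affine in $t$, and hence $\sum_j p_j\log\det A_j$ is geodesically affine (in particular concave). The content of the theorem is the geodesic convexity of $\log\det M$. Setting $D_j:=C_j^\top B_j$, along a geodesic this amounts to proving $t\mapsto\log\det F(t)$ is convex for $F(t)=\sum_j p_jD_j^\top\exp(tH_j)D_j$. Using the commuting-exponential identity $\exp(tH_j)=\exp(\tfrac{t_0}{2}H_j)\exp((t-t_0)H_j)\exp(\tfrac{t_0}{2}H_j)$, any base point $t_0$ can be absorbed into $D_j$, so it is enough to verify $\tfrac{d^2}{dt^2}\log\det F(t)\big|_{t=0}\ge 0$ over all choices of $D_j,H_j$.

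This is where the main work lies. From $\tfrac{d^2}{dt^2}\log\det F=\tr(F^{-1}F'')-\tr\big((F^{-1}F')^2\big)$ evaluated at $t=0$, put $M:=F(0)=\sum_j p_jD_j^\top D_j$. The key device is to stack the data: let $\tilde B$ be the $(\sum_j n_j)\times n$ matrix with blocks $\sqrt{p_j}\,D_j$, and let $H:=\mathrm{blockdiag}(H_1,\dots,H_m)$. Then $M=\tilde B^\top\tilde B$, $F'(0)=\tilde B^\top H\tilde B$, and $F''(0)=\tilde B^\top H^2\tilde B$, while $\Pi:=\tilde B M^{-1}\tilde B^\top$ is symmetric and idempotent, i.e.\ the orthogonal projection onto the range of $\tilde B$. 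The second derivative then collapses to $\tr(\Pi H^2)-\tr(\Pi H\Pi H)$, and using $\Pi^2=\Pi$ this equals $\tr\big(\Pi H(I-\Pi)H\big)=\norm{\Pi H(I-\Pi)}_F^2\ge 0$. Combined with the affine part, this gives geodesic concavity of $\log\BL$.

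I expect the crux to be recognizing the projection structure. The raw second-derivative expression entangles the $D_j$ and $H_j$ in a manner that is not obviously sign-definite, and the clean nonnegativity only surfaces after passing to the stacked operator $\tilde B$ and identifying $\tilde B M^{-1}\tilde B^\top$ as an orthogonal projection, after which the remaining quantity is a squared Frobenius norm. The reduction to $t=0$ and the positive-definiteness of $M$ along the geodesic (via \cref{thm:feasibility}) are routine once this identity is in place.
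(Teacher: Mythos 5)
Your proof is correct, and it takes a genuinely different route from the paper's. The paper argues synthetically via midpoints: it views $\sum_{j} p_jB_j^\top A_jB_j$ as a ``jointly linear map'' $\Phi$ of the tuple $A$ and proves the operator inequality $\Phi(P_1\#Q_1,\dots,P_m\#Q_m)\preceq\Phi(P)\#\Phi(Q)$ (\cref{thm:jlm-on-gm}) by extending Ando's theorem (\cref{thm:plm-on-gm}) through the maximal block-matrix characterization of the geometric mean (\cref{thm:gm-maximal}) combined with \cref{thm:plm-effect}; monotonicity and multiplicativity of $\det$ then give midpoint log-concavity of $\BL(\B,\p;A)$, which continuity upgrades to full geodesic log-concavity. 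You instead parametrize geodesics explicitly as $C_j\exp(tH_j)C_j^\top$, reduce convexity of $t\mapsto\log\det M(A(t))$ to a second-derivative check at $t=0$ via the semigroup absorption trick, and identify the second derivative as $\tr\bigl(\Pi H(I-\Pi)H\bigr)=\norm{\Pi H(I-\Pi)}_F^2\ge 0$, where $\Pi=\tilde B M^{-1}\tilde B^\top$ is the orthogonal projection onto the range of the stacked matrix $\tilde B$ and $H$ is block-diagonal --- no geometric means and no Ando. What each buys: your computation is elementary and self-contained (linear algebra plus calculus), makes the joint dependence on $(A_1,\dots,A_m)$ transparent through the block structure of $H$, and yields quantitative second-order (Hessian) information about the formulation; the paper's argument avoids differentiation altogether, needs only midpoint reasoning plus continuity, and isolates a reusable matrix inequality of independent interest. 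Two minor touch-ups to your write-up, neither a gap: when some $p_j$ may vanish, positive definiteness of $M(A)$ (and of $F(t)$ along geodesics) requires condition 2 of \cref{thm:feasibility} applied to $V=\bigcap_{j:\,p_j>0}\ker B_j$ rather than to $\bigcap_{j\in[m]}\ker B_j$; and you should record that $\log\det F(t)$ is smooth and $F(t)\succ 0$ for all $t$ (which follows from the same kernel argument, since $\ker(\exp(tH_j/2)\,C_j^\top B_j)=\ker B_j$), so that nonnegativity of the second derivative indeed implies convexity along the entire geodesic.
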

\noindent
This formulation leads to a geodesically convex optimization problem on $\bigtimes_{j\in[m]}  \Sn_{++}^{n_j}$ that captures the Brascamp-Lieb constant.

Subsequently, we present a  modified version of Lieb's formulation by combining it with observations made by Bennett {\em et al.}~\cite{bennett2008brascamp} about maximizers of $\BL(\B,\,\p;\,A)$; see 
  \cref{thm:EquivalencetoGeometric}.
\cite{bennett2008brascamp} showed that if $A=(A_j)_{j\in[m]}$ is a maximizer to \cref{eq:BLconstant}, then
$A_j = (B_j M^{-1} B_j^\top)^{-1}$
for each $j\in[m]$, where
$M := \sum_{j\in[m]} p_j B_j^\top A_j B_j.$
Thus, we can write each $A_j$ as a function of $M$ and obtain, $2\log(\BL(\B,\p;A(M)))$ equals
\begin{equation}\label{eq:ben}
\sum_{j\in[m]} p_j \log\det( (B_j M^{-1} B_j^\top)^{-1}) - \log\det\left(\sum_{j\in[m]} p_j B_j^\top (B_j M^{-1} B_j^\top)^{-1} B_j\right). 
\end{equation}
One can show that the expressions
\begin{equ}
\log\det( (B_j M^{-1} B_j^\top)^{-1})
\end{equ}
for each $j\in[m]$ and
\begin{equ}
\log\det\left(\sum_{j\in[m]} p_j B_j^\top (B_j M^{-1} B_j^\top)^{-1} B_j\right)
\end{equ}
are geodesically concave with respect to $M$.
However, the expression in \cref{eq:ben} being a difference, is not geodesically concave with respect to $M$ in general.
However, if $A$ is a global maximizer of $\BL(\B,\p;A)$, then we also have that
\begin{equ}
M = \sum_{j\in[m]} p_j B_j^\top (B_j M^{-1} B_j^\top)^{-1} B_j.
\end{equ}
Combining these two observations, we obtain the following new geodesically concave optimization problem for computing the Brascamp-Lieb constant.

\begin{theorem}[A  Geodesically Log-Concave Formulation of the Brascamp-Lieb Constant]
Let $(\B,\p)$ be a feasible Brascamp-Lieb  datum with $B_j\in\R^{n_j\times n}$ for each $j\in[m]$. 
Let  $F_{\B,\,\p}(X):\mathds{S}_{++}^n\rightarrow\R$ be defined as follows,
\begin{equ}[eq:F]
F_{\B,\p}(X) := \log\det(X) - \sum_{j\in[m]} p_j\log\det(B_j X B_j^\top).
\end{equ}
Then, $F_{\B,\p}$ is geodesically concave.
Furthermore, if $(\B,\p)$ is simple, then $\sup\limits_{X\in \Sn^n_{++}}
F_{\B,\p}(X)$ is attained.
If $X^\star$ is a maximizer of $F_{\B,\p}$, then $\exp(\frac{1}{2}F_{\B,\p}(X^\star))=\BL(\B,\p)$ and $A^\star=((B_j X^\star B_j^\top)^{-1})_{j\in[m]}$ maximizes $\BL(\B,\p;\,A^\star)$.
\label{thm:new_formulation}
\end{theorem}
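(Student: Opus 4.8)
The plan is to establish the three assertions in order: geodesic concavity of $F_{\B,\p}$, attainment of the supremum when $(\B,\p)$ is simple, and the correspondence between maximizers of $F_{\B,\p}$ and of Lieb's objective $\BL(\B,\p;A)$ from \cref{thm:EquivalencetoGeometric}. I will use throughout that, in the affine-invariant metric, every geodesic can be written as $\gamma(t)=g\exp(tD)g^\top$ with $g\in\GL_n$ and $D$ symmetric, and that $\Sn^n_{++}$ is geodesically complete so that any two points are joined by such a geodesic. Diagonalizing $D$ and absorbing the orthogonal factor into $g$, I may assume $D=\diag(d_1,\dots,d_n)$. (We may also assume each $p_j>0$, since zero components drop out of both $F_{\B,\p}$ and $\BL(\B,\p;A)$.)

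For the concavity, first note $\log\det(\gamma(t))=2\log\lvert\det g\rvert+t\,\tr(D)$ is affine in $t$, so $\log\det(X)$ is geodesically affine; together with the hypothesis $n=\sum_j p_j n_j$ of \cref{thm:feasibility} this also makes $F_{\B,\p}$ invariant under $X\mapsto cX$, consistent with the ``unique up to scalar'' statement. The heart of the argument is that $X\mapsto\log\det(B_jXB_j^\top)$ is geodesically convex for each surjective $B_j$. Writing $C:=B_jg$ (which has full row rank $n_j$) and using $e^{tD}=e^{tD/2}e^{tD/2}$ with the Cauchy--Binet formula gives
\[
\det\!\big(B_j\gamma(t)B_j^\top\big)=\det\!\big(Ce^{tD}C^\top\big)=\sum_{S\in\binom{[n]}{n_j}}\det(C_{:,S})^2\,e^{t\sum_{i\in S}d_i},
\]
a nonnegative, not identically zero, combination of exponentials in $t$. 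Hence $t\mapsto\log\det(B_j\gamma(t)B_j^\top)$ is a log-sum-exp function and is convex. Since $p_j\ge 0$, each $-p_j\log\det(B_jXB_j^\top)$ is geodesically concave, and adding the affine term $\log\det(X)$ proves $F_{\B,\p}$ is geodesically concave.

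For the remaining assertions I compute the Euclidean gradient on symmetric matrices, $\nabla F_{\B,\p}(X)=X^{-1}-M(X)$ with $M(X):=\sum_{j}p_jB_j^\top(B_jXB_j^\top)^{-1}B_j$; the Riemannian gradient vanishes precisely when this does, so the critical points of $F_{\B,\p}$ are exactly the solutions of $M(X)=X^{-1}$. Setting $M:=X^{-1}$ and $A_j:=(B_jXB_j^\top)^{-1}$ turns this into item~3 of \cref{thm:EquivalencetoGeometric} ($M$ invertible with $A_j^{-1}=B_jM^{-1}B_j^\top$), and conversely any global maximizer $A$ of $\BL(\B,\p;A)$ yields the critical point $X:=M^{-1}\succ 0$; thus critical points of $F_{\B,\p}$ correspond to global maximizers of $\BL(\B,\p;A)$. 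Because $\Sn^n_{++}$ is open, any maximizer of $F_{\B,\p}$ is a critical point; and because $F_{\B,\p}$ is geodesically concave, restricting it to a geodesic out of a critical point $X_0$ and using $h'(0)=\langle\nabla F_{\B,\p}(X_0),\gamma'(0)\rangle=0$ shows every critical point is a global maximizer.

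Attainment in the simple case now follows: by \cref{thm:EquivalencetoGeometric} a global maximizer $A^\star$ of $\BL(\B,\p;A)$ exists iff $(\B,\p)$ is simple, and the correspondence converts it into a critical point $X^\star=M^{-1}$ of $F_{\B,\p}$, which by concavity attains the supremum. Finally, at a maximizer $X^\star$ we have $M(X^\star)=(X^\star)^{-1}$ and $A^\star_j=(B_jX^\star B_j^\top)^{-1}$, so
\[
2\log\BL(\B,\p;A^\star)=\sum_j p_j\log\det(A^\star_j)-\log\det\!\big(M(X^\star)\big)=F_{\B,\p}(X^\star),
\]
using $A^\star_j=(B_jX^\star B_j^\top)^{-1}$ and $M(X^\star)=(X^\star)^{-1}$; since $A^\star$ is a global maximizer, $\BL(\B,\p;A^\star)=\BL(\B,\p)$ by \cref{thm:BLconstant}, yielding $\exp(\tfrac12 F_{\B,\p}(X^\star))=\BL(\B,\p)$ and the stated form of $A^\star$. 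I expect the main obstacle to be the geodesic convexity of $\log\det(B_jXB_j^\top)$: the Cauchy--Binet identity is what linearizes the exponents and exposes the log-sum-exp structure, and without it the difference of $\log\det$ terms in $F_{\B,\p}$ has no obvious sign for its geodesic Hessian. The rest is bookkeeping resting on \cref{thm:EquivalencetoGeometric}.
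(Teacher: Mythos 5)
Your proof is correct, but the key step—geodesic concavity of $F_{\B,\p}$—is argued by a genuinely different route than the paper's. The paper derives the geodesic convexity of $X\mapsto\log\det(B_jXB_j^\top)$ from operator-theoretic machinery: it first shows (via \cref{lem:feasible-plm}) that $X \mapsto B_jXB_j^\top$ is a strictly positive linear map, and then invokes Ando's theorem on geometric means (\cref{thm:plm-on-gm}) through the Sra--Hosseini corollary (\cref{thm:log-lm-gc}), i.e., $\Phi(P\#Q)\preceq\Phi(P)\#\Phi(Q)$ plus monotonicity and multiplicativity of $\det$, which yields midpoint convexity and then convexity by continuity. You instead parametrize each geodesic explicitly as $\gamma(t)=g e^{tD}g^\top$ with $D$ diagonal, and use Cauchy--Binet to write $\det(B_j\gamma(t)B_j^\top)=\sum_{S}\det((B_jg)_{:,S})^2\,e^{t\sum_{i\in S}d_i}$, reducing everything to convexity of log-sum-exp in the single variable $t$; surjectivity of $B_j$ guarantees the sum is not identically zero. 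Your argument is more elementary and self-contained (no geometric-mean machinery, no separate midpoint-plus-continuity step, and it gives convexity along the whole line $t\in\R$, not just $[0,1]$), while the paper's route via Ando's inequality is what generalizes to the \emph{joint} convexity of $\log\det\bigl(\sum_j\Phi_j(P_j)\bigr)$ needed for \cref{thm:Lieb}—your coordinate trick does not directly extend there, because the exponents coming from different blocks $P_j$ cannot be simultaneously diagonalized. The remainder of your proof (Euclidean gradient $X^{-1}-M(X)$, identification of critical points with condition~3 of \cref{thm:EquivalencetoGeometric}, existence via simplicity, and the final computation giving $2\log\BL(\B,\p)=F_{\B,\p}(X^\star)$) coincides with the paper's argument, and your additional remarks—that critical points of a geodesically concave function on the open set $\Sn^n_{++}$ are global maximizers, and that zero exponents $p_j$ may be discarded so that \cref{thm:EquivalencetoGeometric} (which assumes $p_j>0$) applies—are small but genuine points of care that the paper glosses over.
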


\section{The Positive Definite Cone, its Riemannian Geometry, and Geodesic Convexity}
\label{sec:geodesic}
\paragraph{The Metric.} Consider the set of positive definite matrices $\Sn^d_{++}$ as a subset of $\mathbb{R}^{d \times d}$ with the inner product $\langle X,Y\rangle := \tr(X^\top Y)$ for $X,Y \in \mathbb{R}^{d \times d}$.
At any point $X \in \Sn^d_{++}$, the tangent space consists of  all $d \times d$ real symmetric matrices. 
There is a natural metric $g$ on this set that gives it a Riemannian structure:
For $X \in \Sn^d_{++}$ and two symmetric matrices $\nu,\xi$  
\begin{equ}\label{eq:metric}
g_X(\nu, \xi):=\langle X^{-1} \nu,X^{-1} \xi\rangle.
\end{equ}
It is an exercise in differentiation to check that this metric arises as the Hessian of the following function $\varphi:\Sn^d_{++}\to \mathbb{R}$:
$$ \varphi(X):= -\log \det X.$$
Hence,  $\Sn^d_{++}$ endowed with the metric $g$ is not only a Riemannian manifold, but a Hessian manifold \cite{Shima07HessianBook}. 
The study of this metric on $\Sn^d_{++}$ goes back at least to Siegel \cite{siegel1943symplectic}; see also the book of Bhatia \cite{bhatia2009positive}.

\noindent
\paragraph{Geodesics on $\Sn^d_{++}$.} If $X,Y \in \Sn^d_{++}$ and $\gamma:[0,1]\to\Sn^d_{++}$ is a smooth curve between $X$ and $Y$, then the arc-length of $\gamma$ is given by the (action) integral
\begin{equ}[eq:length]
L(\gamma):= \int_0^1 g_{\gamma(t)}\left(\frac{d\gamma(t)}{dt},\frac{d\gamma(t)}{dt}\right)dt.
\end{equ}
The geodesic between $X$ and $Y$ is the unique smooth curve between $X$ and $Y$ with the smallest arc-length.
The Euler-Lagrange equations describing a geodesic can be obtained by a variational approach to $L(\gamma)$. 
The following theorem asserts that between any two points in $\Sn^d_{++}$, there is a 
geodesic that connects them.
In other words, $\Sn^d_{++}$ is a geodesically convex set.
Moreover, there exists a closed form expression for the geodesic between two points, a formula that is useful for calculations.

\begin{theorem}[Geodesics on $\Sn^d_{++}$ \cite{bhatia2009positive}, Theorem 6.1.6]\label{thm:geodesics_pd}
For $X, Y \in \Sn^d_{++}$,
the exists a unique geodesic between $X$ and $Y$, and this geodesic is parametrized by the following equation: 
\begin{equ}
X\#_t Y := X^{1/2} (X^{-1/2} Y X^{-1/2})^{t} X^{1/2}
\end{equ}
for $t\in[0,1]$.
\end{theorem}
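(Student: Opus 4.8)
\textbf{Proof plan for \cref{thm:new_formulation}.}

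The plan is to split the statement into three tasks: (a) geodesic concavity of $F_{\B,\p}$, (b) existence of a maximizer under simplicity, and (c) identifying the optimal value and maximizer with the Brascamp-Lieb constant. For (a), the key move is to exploit the closed-form geodesics from \cref{thm:geodesics_pd}. Along the geodesic $X\#_t Y$, I would first recall the standard fact that $t\mapsto\log\det(X\#_t Y)$ is \emph{affine} in $t$; this handles the leading term $\log\det(X)$. The work is in the subtracted terms: I claim each map $t\mapsto\log\det\!\left(B_j (X\#_t Y) B_j^\top\right)$ is geodesically \emph{convex}, so that $-p_j$ times it is concave and the sum $F_{\B,\p}$ is concave. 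The natural route is to reduce to a statement about a single linear surjection $B$: it suffices to show $X\mapsto\log\det(BXB^\top)$ is geodesically convex on $\Sn^n_{++}$. I would prove this by writing the geodesic explicitly and using operator-convexity / operator-concavity properties of the map $X\mapsto (BX^{-1}B^\top)^{-1}$ (a compression of $X$), combined with the integral representation that makes $\log\det$ along geodesics tractable. Concretely, one shows that for the projection-type quantity $B(X\#_t Y)B^\top$ the log-determinant is a convex function of $t$; the cleanest argument invokes the matrix geometric-mean inequality $B(X\#_t Y)B^\top \preceq (BXB^\top)\#_t(BYB^\top)$ together with monotonicity of $\log\det$ and affineness of $\log\det$ along the geometric-mean geodesic of $BXB^\top$ and $BYB^\top$. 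This single-$B$ lemma is the technical heart and the main obstacle, since the operator inequality for compressions of geometric means must be justified carefully.

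For (b), existence under simplicity, I would connect $F_{\B,\p}$ back to Lieb's formulation. Using the substitution $A_j=(B_jXB_j^\top)^{-1}$ one checks that for any $X$, setting $M$ via this $A$ gives a lower bound on $2\log\BL(\B,\p)$, and conversely the characterization of maximizers in \cref{thm:EquivalencetoGeometric} guarantees, precisely when $(\B,\p)$ is simple, a genuine maximizer $A^\star$ of $\BL(\B,\p;A)$ satisfying $A_j^\star=(B_j M^{-1}B_j^\top)^{-1}$ with $M=\sum_j p_j B_j^\top A_j^\star B_j$. Taking $X^\star=M^{-1}$ then produces a critical point of $F_{\B,\p}$; since $F_{\B,\p}$ is geodesically concave by part (a), any critical point is a global maximizer, so the supremum is attained. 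The simplicity hypothesis is exactly what \cref{thm:EquivalencetoGeometric} requires for existence and uniqueness up to scalar, so I would cite it directly rather than re-prove compactness.

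For (c), I would verify the value and maximizer correspondence by direct substitution. Plugging $A_j=(B_jXB_j^\top)^{-1}$ into \cref{eq:BLconstant} and taking logarithms, the numerator contributes $\sum_j p_j\log\det\big((B_jXB_j^\top)^{-1}\big)=-\sum_j p_j\log\det(B_jXB_j^\top)$; the key identity to establish is that, at a critical point where $M=X^{-1}=\sum_j p_j B_j^\top(B_jXB_j^\top)^{-1}B_j$, the denominator term $\log\det\big(\sum_j p_j B_j^\top A_j B_j\big)$ equals $\log\det(X^{-1})=-\log\det(X)$. This makes $2\log\BL(\B,\p;A)=\log\det(X)-\sum_j p_j\log\det(B_jXB_j^\top)=F_{\B,\p}(X)$, so $\exp(\tfrac12 F_{\B,\p}(X^\star))=\BL(\B,\p;A^\star)=\BL(\B,\p)$ by \cref{thm:BLconstant}, and $A^\star=((B_jX^\star B_j^\top)^{-1})_{j\in[m]}$ is the claimed maximizer. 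The only subtlety here is confirming the critical-point condition $X^{\star-1}=\sum_j p_j B_j^\top(B_jX^\star B_j^\top)^{-1}B_j$ holds at the maximizer, which again follows from \cref{thm:EquivalencetoGeometric} via the correspondence $X^\star=M^{-1}$.
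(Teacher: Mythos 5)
You have proved the wrong statement. The statement in question is \cref{thm:geodesics_pd}: that between any $X,Y\in\Sn^d_{++}$ there is a unique geodesic with respect to the metric $g_X(\nu,\xi)=\tr(X^{-1}\nu X^{-1}\xi)$ of \cref{eq:metric}, and that it is given in closed form by $X\#_t Y = X^{1/2}(X^{-1/2}YX^{-1/2})^t X^{1/2}$. Your write-up is instead a proof plan for \cref{thm:new_formulation} (geodesic concavity of $F_{\B,\p}$, attainment under simplicity, and the identification with $\BL(\B,\p)$); it nowhere argues existence, uniqueness, or the parametrization of geodesics, and in fact \emph{uses} the closed-form geodesic of \cref{thm:geodesics_pd} as a black box in part (a). So as a proof of the stated theorem it has a total gap: none of the required content is present. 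Note also that the paper itself does not prove \cref{thm:geodesics_pd}; it is imported from Bhatia's book (Theorem 6.1.6), so there is no internal proof to match --- but that makes it a citation to reproduce or re-derive, not an invitation to substitute a different theorem.

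If you wanted to actually prove \cref{thm:geodesics_pd}, the outline is: derive the Euler--Lagrange equation for the energy functional built from \cref{eq:length}, which for this metric reads $\ddot\gamma = \dot\gamma\,\gamma^{-1}\dot\gamma$; check that the candidate curve satisfies it, most cleanly by using the isometry $Z\mapsto X^{-1/2}ZX^{-1/2}$ of $(\Sn^d_{++},g)$ to reduce to the curve $\gamma(t)=W^t=\exp(t\log W)$ with $W:=X^{-1/2}YX^{-1/2}$, for which $\dot\gamma=\gamma\log W$ and the geodesic equation is immediate; and establish uniqueness, e.g.\ by showing $(\Sn^d_{++},g)$ is complete, simply connected, and of nonpositive curvature and invoking the Cartan--Hadamard theorem, or via the semiparallelogram law as in Bhatia's Chapter 6. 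Separately, and for what it is worth, your plan for \cref{thm:new_formulation} does track that theorem's actual proof in the paper --- the compression inequality $B(X\#_t Y)B^\top \preceq (BXB^\top)\#_t(BYB^\top)$ is exactly what \cref{thm:plm-on-gm} and \cref{thm:log-lm-gc} supply, and the critical-point analysis via \cref{thm:EquivalencetoGeometric} matches --- but that does not address the statement you were asked to prove.
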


\paragraph{Geodesic Convexity.} 
One definition of convexity of a function $f$ in a Euclidean space is that the average of the function at the end points of each line in the domain is at-least  the value of the function at the average point on the line.
Geodesic convexity is a natural extension of this notion of convexity from Euclidean spaces to Riemannian manifolds that are geodesically convex.
A set in the manifold is said to be geodesically convex if for every pair of points in the set, the geodesic combining these points lies entirely in the set.

\begin{definition}[Geodesically Convex Sets]
A set $S\subseteq \Sn^{d}_{++}$ is called geodesically convex if for any $X,Y\in S$ and $t\in[0,1]$, $X\#_tY\in S$.
\end{definition}
\noindent
A function defined on a geodesically convex set is said to be geodesically convex if the average of the function at the end points of any geodesic in the domain is at-least  the value of the function at the average point on the geodesic.

\begin{definition}[Geodesically Convex Functions]
Let $S\subseteq \Sn^{d}_{++}$ be a geodesically convex set.
A function $f:S\rightarrow \R$ is called geodesically convex if for any $X,Y\in\Sn^d_{++}$ and $t\in[0,1]$,
\begin{equ}
f(X\#_t Y) \leq (1-t) f(X) + t f(Y).
\end{equ}
$f$ is called geodesically concave if $-f$ is geodesically convex.
\end{definition}

\noindent
An important point regarding geodesic convexity is that a non-convex function might be geodesically convex or vice-versa.
In general, one cannot convert a geodesically convex function to a convex function by a change of variables.
A well-known example for this is the $\log\det(X)$ function whose concavity a classical result from the matrix calculus.
On the other hand, a folklore result is that $\log\det(X)$ is both geodesically convex and geodesically concave  on the space of positive definite matrices with the metric \cref{eq:metric}. 

\begin{proposition}[Geodesic Linearity of $\log\det$]
The $\log\det(X)$ function is geodesically linear, i.e, it is both geodesically convex and geodesically concave over $\Sn^n_{++}$.
\label{thm:det-linear}
\end{proposition}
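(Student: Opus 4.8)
The plan is to prove the sharper statement that equality holds in the definition of geodesic convexity, i.e.
\[
\log\det(X \#_t Y) = (1-t)\log\det(X) + t\log\det(Y)
\]
for all $X,Y\in\Sn^n_{++}$ and $t\in[0,1]$. Since the right-hand side is affine in $t$, this single identity simultaneously certifies the convexity inequality and the concavity inequality (both as equalities), which is exactly what geodesic linearity asserts. The only ingredient I need is the closed-form geodesic from \cref{thm:geodesics_pd}, namely $X\#_t Y = X^{1/2}(X^{-1/2} Y X^{-1/2})^{t} X^{1/2}$, so the whole argument reduces to evaluating a determinant.

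First I would apply multiplicativity of the determinant to the closed form, using $\det(X^{1/2}) = \det(X)^{1/2}$, to obtain
\[
\det(X\#_t Y) = \det(X)\cdot\det\!\big((X^{-1/2}YX^{-1/2})^{t}\big).
\]
The crux is to evaluate $\det(M^{t})$ for the positive definite matrix $M := X^{-1/2}YX^{-1/2}$. Writing its spectral decomposition $M = U\,\diag(\lambda_1,\dots,\lambda_n)\,U^\top$ with each $\lambda_i>0$, the fractional power is defined by the spectral calculus as $M^{t} = U\,\diag(\lambda_1^{t},\dots,\lambda_n^{t})\,U^\top$, whence $\det(M^{t}) = \prod_i \lambda_i^{t} = \big(\prod_i \lambda_i\big)^{t} = \det(M)^{t}$. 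This is the one place that warrants care, precisely because $M^{t}$ is given through the functional calculus rather than by repeated multiplication; once this identity is secured, the remainder is routine bookkeeping, so I do not expect a genuine obstacle.

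Finally I would compute $\det(M) = \det(X^{-1/2})\det(Y)\det(X^{-1/2}) = \det(Y)/\det(X)$, again by multiplicativity, and substitute back to get
\[
\det(X\#_t Y) = \det(X)\cdot\left(\frac{\det(Y)}{\det(X)}\right)^{t} = \det(X)^{1-t}\det(Y)^{t}.
\]
Taking logarithms of both sides yields the claimed affine identity for $\log\det(X\#_t Y)$. Because equality holds for every $t\in[0,1]$, both defining inequalities are met, establishing that $\log\det$ is geodesically linear on $\Sn^n_{++}$ and completing the proof.
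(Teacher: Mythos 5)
Your proof is correct and takes essentially the same approach as the paper's: both evaluate $\log\det$ on the closed-form geodesic $X\#_t Y = X^{1/2}(X^{-1/2}YX^{-1/2})^{t}X^{1/2}$ from \cref{thm:geodesics_pd} and use multiplicativity of the determinant to obtain the exact affine identity $\log\det(X\#_t Y)=(1-t)\log\det(X)+t\log\det(Y)$. The only difference is that you justify the key step $\det(M^{t})=\det(M)^{t}$ explicitly via the spectral calculus, which the paper's one-line computation leaves implicit.
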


\begin{proof}
Let $X, Y\in\Sn^n_{++}$ and $t\in[0,1]$. 
Then,
\begin{equ}
\log\det(X\#_tY) \stackrel{\cref{thm:geodesics_pd}}{=} \log\det(X^{1/2} (X^{-1/2}Y X^{-1/2})^t X^{1/2}) = (1-t)\log\det(X) + t\log\det(Y).
\end{equ}
Therefore, $\log\det(X)$ is a geodesically linear function over positive definite cone with respect to the metric in~\cref{eq:metric}.
\end{proof}

\noindent
Henceforth, when we mention geodesic convexity it is with respect to the metric in \cref{eq:metric}.
Geodesically convex functions share some properties with usual convex functions.
One such property is the relation between local and global minimizers.

\begin{theorem}[Minimizers of Geodesically Convex Functions~\cite{Rapcsak1997}, Theorem 6.1.1]
Let $S\subseteq \Sn^d_{++}$ be a geodesically convex set and $f:S\to\R$ be a geodesically convex function.
Then, any local minimum point of $f$ is also a global minimum of $f$.
More precisely, if $x^\star:=\argmin_{x\in O} f(x)$ for some open geodesically convex subset $O$ of $S$, then $f(x^\star)=\inf_{x\in S} f(x)$.
\label{thm:gc-minimizers}
\end{theorem}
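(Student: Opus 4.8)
The statement to prove is \cref{thm:gc-minimizers}: on a geodesically convex set $S\subseteq\Sn^d_{++}$, any local minimum of a geodesically convex $f$ is a global minimum, and indeed the minimum over any open geodesically convex subset equals the infimum over all of $S$. The plan is to mimic the classical Euclidean argument, replacing straight-line segments by the geodesics $X\#_t Y$, and to exploit that geodesic convexity gives exactly the inequality $f(X\#_t Y)\le (1-t)f(X)+t f(Y)$ needed to run the standard contradiction.

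\textbf{Key steps.}
First I would fix the data: let $x^\star\in O$ be a local minimizer, meaning there is a neighborhood $U\subseteq O$ of $x^\star$ with $f(x^\star)\le f(x)$ for all $x\in U$. Suppose toward a contradiction that $x^\star$ is not a global minimum over $S$, so there exists $Y\in S$ with $f(Y)<f(x^\star)$. Second, I would form the geodesic $t\mapsto x^\star\#_t Y$, which by \cref{thm:geodesics_pd} is the unique geodesic joining $x^\star$ and $Y$, and which lies entirely in $S$ because $S$ is geodesically convex. Third, I would invoke geodesic convexity along this curve to obtain, for every $t\in[0,1]$,
\begin{equ}
f(x^\star\#_t Y)\le (1-t)f(x^\star)+t\,f(Y)=f(x^\star)+t\bigl(f(Y)-f(x^\star)\bigr).
\end{equ}
Since $f(Y)-f(x^\star)<0$, the right-hand side is strictly less than $f(x^\star)$ for every $t\in(0,1]$. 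Fourth, I would use continuity of the geodesic at $t=0$ (note $x^\star\#_0 Y=x^\star$, and $X\#_t Y$ is continuous in $t$ from the closed-form expression) to conclude that for all sufficiently small $t>0$ the point $x^\star\#_t Y$ lies in the neighborhood $U$. Choosing such a small $t$ yields a point in $U$ with strictly smaller value than $f(x^\star)$, contradicting local minimality. This proves $x^\star$ is a global minimizer, i.e. $f(x^\star)=\inf_{x\in S}f(x)$, and since $x^\star$ was an arbitrary minimizer over the open geodesically convex set $O$, the stated refinement follows.

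\textbf{Main obstacle.}
The analysis itself is routine once the right geodesic is in place; the one point requiring care is the topological step that the tail of the geodesic enters the neighborhood $U$. I would make this precise by noting that $t\mapsto x^\star\#_t Y$ is a continuous map $[0,1]\to\Sn^d_{++}$ in the ambient Euclidean topology (immediate from the closed-form $X^{1/2}(X^{-1/2}YX^{-1/2})^t X^{1/2}$), so $U$ being a neighborhood of $x^\star=x^\star\#_0 Y$ guarantees a $\delta>0$ with $x^\star\#_t Y\in U$ for $t\in[0,\delta)$. A secondary subtlety is ensuring that ``local minimum'' is interpreted with respect to this same topology, so that the neighborhood $U$ and the geodesic live in a common space; I would state this compatibility explicitly at the outset. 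No convexity of $f$ beyond the defining geodesic inequality is needed, so differentiability or smoothness of $f$ is not required anywhere in the argument.
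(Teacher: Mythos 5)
Your proof is correct. Note that the paper does not prove \cref{thm:gc-minimizers} at all --- it imports the result from Rapcs\'ak's book --- so there is no in-paper proof to compare against; your argument is exactly the standard one for this statement: follow the geodesic $x^\star\#_t Y$ toward a hypothetically better point $Y\in S$ (which stays in $S$ by geodesic convexity of $S$), apply the defining inequality to get $f(x^\star\#_t Y)\le f(x^\star)+t\bigl(f(Y)-f(x^\star)\bigr)<f(x^\star)$ for all $t\in(0,1]$, and use continuity of $t\mapsto x^\star\#_t Y$ (immediate from the closed form in \cref{thm:geodesics_pd}, since $t\mapsto M^t$ is continuous for $M\in\Sn^d_{++}$ via the spectral decomposition) to place $x^\star\#_t Y$ inside the neighborhood of local minimality for small $t>0$, a contradiction. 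One minor observation: your argument uses only that $O$ is open (so the tail of the geodesic enters $O$) together with geodesic convexity of $S$; the hypothesis that $O$ is itself geodesically convex is never needed, so your proof in fact establishes a slightly stronger statement than the one quoted.
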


\paragraph{Geometric Mean of Matrices and Linear Maps.}
While  the function $\log\det(P)$ is geodesically linear, our proof of \cref{thm:Lieb} relies on 
the geodesic convexity of $\log\det(\sum_{j \in [m]} p_j B_j^\top A_j B_j)$.
A simple but important observation is that, if $(\B,\p)$ is feasible, then  $p_j B_j^\top A_j B_j$ is a strictly positive linear map for each $j$ as proved below.
\begin{lemma}
Let $(\B,\,\p)$ be a feasible Brascamp-Lieb datum with $B_j\in\R^{n_j\times n}$ for each $j\in[m]$.
Then, $\Phi_j(X):=B_j^\top X B_j$ is a strictly positive linear map for each $j\in[m]$.
\label{lem:feasible-plm}
\end{lemma}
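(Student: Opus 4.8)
The plan is to verify the two defining properties of a strictly positive linear map directly from the algebraic form of $\Phi_j(X)=B_j^\top X B_j$, with the surjectivity of $B_j$ (which is built into the definition of a Brascamp--Lieb datum) doing the essential work. Positivity is immediate: the map $X\mapsto B_j^\top X B_j$ is linear, and for any $X\succeq 0$ and any $v\in\R^n$ we have $v^\top \Phi_j(X)\,v=(B_jv)^\top X\,(B_jv)\ge 0$, so $\Phi_j$ carries the PSD cone $\Sn^{n_j}_+$ into the PSD cone $\Sn^n_+$.

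The point that requires care is strictness. Since $\Phi_j$ \emph{raises} dimension (it maps $\Sn^{n_j}_+$ into $\Sn^n_+$ with $n_j\le n$) and $\rank(\Phi_j(X))\le\rank(B_j^\top)=n_j$, the image of a positive definite $X$ is in general rank deficient; hence strictness here cannot be read as ``positive definite to positive definite.'' The operative meaning is that $\Phi_j$ annihilates no nonzero PSD matrix (equivalently, its adjoint $Y\mapsto B_jYB_j^\top$ sends positive definite matrices to positive definite matrices, using that $B_j^\top$ is injective). To prove this, I would suppose $X\succeq 0$ satisfies $\Phi_j(X)=B_j^\top X B_j=0$; then $(X^{1/2}B_j)^\top(X^{1/2}B_j)=0$, so $X^{1/2}B_j=0$, i.e.\ $\mathrm{range}(B_j)\subseteq\ker X^{1/2}$. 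Because $B_j$ is surjective, $\mathrm{range}(B_j)=\R^{n_j}$, forcing $X^{1/2}=0$ and hence $X=0$. This is the one substantive step.

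Finally I would record where feasibility enters, since for each individual $\Phi_j$ only the surjectivity of $B_j$ is used and feasibility acts merely as the standing hypothesis. Feasibility becomes genuinely necessary for the downstream object $\Psi(A):=\sum_{j\in[m]}p_jB_j^\top A_jB_j$, which is strictly positive in the strong sense ($\Psi(A)\succ 0$ for positive definite $A_j$ and $p_j>0$) exactly when $\bigcap_{j}\ker B_j=\{0\}$; this follows by applying the feasibility condition $\dim V\le\sum_j p_j\dim(B_jV)$ of \cref{thm:feasibility} to $V=\bigcap_j\ker B_j$, where each $B_jV=\{0\}$ forces $\dim V=0$. I expect the only real obstacle to be conceptual rather than computational: correctly interpreting ``strictly positive'' for a dimension-increasing map, after which both the positivity and the strictness reduce to the congruence and full-column-rank arguments above.
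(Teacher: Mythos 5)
Your proof is correct, but it follows a genuinely different route from the paper's. The paper argues by contradiction from feasibility: if some $\Phi_{j_0}$ were not strictly positive, it extracts a positive definite $X_0$ and a nonzero vector $v$ annihilated by $B_{j_0}$ (the paper writes $B_{j_0}v=0$, but its conclusion $\rank(B_{j_0})\le n_{j_0}-1$ shows it is de facto working with the adjoint map $X\mapsto B_{j_0}XB_{j_0}^\top$, the form later used in \cref{thm:new_formulation}); this yields $\dim(B_{j_0}\R^n)<n_{j_0}$, and then both conditions of \cref{thm:feasibility} give the contradiction $n=\dim(\R^n)\le\sum_{j}p_j\dim(B_j\R^n)<\sum_j p_j n_j=n$. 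You instead bypass feasibility entirely and use the surjectivity of $B_j$ that \cref{def:BL} builds into every Brascamp--Lieb datum, via the step $B_j^\top XB_j=0\Rightarrow X^{1/2}B_j=0\Rightarrow X=0$. Both routes are sound, and they buy different things. The paper's argument shows that feasibility by itself forces each $B_j$ to be surjective (so the definitional surjectivity is redundant for feasible data); on the other hand, its final strict inequality requires $p_{j_0}>0$, so it tacitly assumes positive exponents, whereas your argument is indifferent to $p$. Your version also isolates correctly where feasibility is genuinely indispensable downstream: not for the individual $\Phi_j$, but for the positive definiteness of the joint map $\sum_j p_jB_j^\top A_jB_j$, which you obtain from \cref{thm:feasibility} applied to $V=\bigcap_{j:p_j>0}\ker B_j$. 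Finally, your discussion of what ``strictly positive'' must mean for the dimension-raising map $\Sn^{n_j}_+\to\Sn^{n}_+$ is a real point of care rather than pedantry: when $n_j<n$ every matrix in the image of $\Phi_j$ has rank at most $n_j$, so the literal positive-definite-to-positive-definite reading of the lemma is false, and the only tenable readings are exactly the ones you prove --- no nonzero PSD matrix is annihilated, equivalently the adjoint $Y\mapsto B_jYB_j^\top$ maps positive definite matrices to positive definite matrices --- which is also what the paper's own proof actually establishes and what its later applications require.
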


\begin{proof}
Let us assume that for some $j_0\in[m]$, $\Phi_{j_0}(X)$ is not strictly positive linear map.
Then, there exists $X_0\in\Sn^{n}_{++}$ such that $\Phi_{j_0}(X_0)$ is not positive definite.
Thus, there exists $v\in\R^{n_{j_0}}$ such that $v^\top \Phi_{j_0}(X_0) v \leq 0$.
Equivalently, $(B_{j_0} v)^\top X_0 (B_{j_0} v)\leq 0$.
Since $X_0$ is positive definite, we get $B_{j_0} v = 0$.
Hence, $v^\top B_{j_0}^\top B_{j_0} v = 0$.
Consequently, rank of $B_{j_0}$ is at most $n_j-1$ and $\dim(B_j \R^n)<n_j$.
Therefore, 
\begin{equ}
n = \dim(\R^n) \leq \sum_{j\in[m]} p_j \dim(B_j \R^n) < \sum_{j\in[m]} p_j n_j = n,
\end{equ}  
by~\cref{thm:feasibility},
 a contradiction.
Consequently, for any $j\in[m]$, $\Phi_j(X):=B_j^\top X B_j$ is strictly positive linear whenever $(\B,\p)$ is feasible.
\end{proof}

\noindent
The joint geodesic convexity of $\log\det(\sum_{j \in [m]} p_j B_j^\top A_j B_j)$ follows from a more general observation (that we prove) that asserts that  if $\Phi_j$s  are strictly positive linear maps from $\Sn^{n_j}_+$ to $\Sn^{n}_+$, then 
$\log\det(\sum_{j \in [m]} \Phi_j(A_j))$ is geodescially convex.
Sra and Hosseini~\cite{sra2015conic} observed this when $m=1$.
Their result follows from a result of Ando~\cite{Ando1979} about  ``geometric means'' that is also important for us and  we explain it next.

The geometric mean of two matrices was introduced by Pusz and Woronowicz~\cite{Pusz1975}.
If $P,Q\in\Sn^d_{++}$, then the geometric mean of $P$ and $Q$ is defined as
\begin{equ}[eq:gm]
P\#_{1/2}Q=P^{1/2}(P^{-1/2}QP^{-1/2})^{1/2}P^{1/2}.
\end{equ}
By abuse of notation, we drop $\nicefrac{1}{2}$ and denote geometric mean by $P\#Q$.
Recall that, the geodesic convexity of a function $f:\Sn^{d}_{++}\to\R$ is equivalent to for any $P,Q\in\Sn^{d}_{++}$ and $t\in[0,\,1]$
\begin{equ}
f(P\#_t Q) \leq (1-t) f(P) + t f(Q).
\end{equ}
If $f$ is continuous, then the geodesic convexity of $f$ can be deduced from the following: 
\begin{equ}
\forall P,Q\in\Sn^{d}_{++}, \; \; f(P\#Q)\leq \frac{1}{2}f(P)+\frac{1}{2}f(Q).
\end{equ}

\noindent
 Ando proved the following result about the effect of a strictly positive linear map on the geometric mean of two matrices.

\begin{theorem}[Effect of a Linear Map over Geometric Mean~\cite{Ando1979}, Theorem 3]
Let $\Phi:\Sn^{d}_{+}\rightarrow \Sn^{d'}_{+}$ be a strictly positive linear map. If $P,Q\in\Sn^{d}_{++}$, then 
\begin{equ}[eq:lm-gm-ineq]
\Phi(P\#Q) \preceq \Phi(P)\#\Phi(Q). 
\end{equ}
\label{thm:plm-on-gm}
\end{theorem}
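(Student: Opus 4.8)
The plan is to prove the inequality through the \emph{maximal characterization} of the geometric mean, which converts the statement into a monotonicity assertion about a single block matrix under $\Phi$. The characterization I would establish first is that for $P,Q\in\Sn^d_{++}$,
\[
P\#Q = \max\left\{ X=X^\top : \begin{pmatrix} P & X \\ X & Q\end{pmatrix}\succeq 0\right\},
\]
where the maximum is with respect to the Loewner order and is attained at $X=P\#Q$; that is, $P\#Q$ itself satisfies the block-positivity constraint and dominates every symmetric $X$ that does.

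To prove this characterization I would use the Schur complement: since $P\succ 0$, the block matrix is positive semidefinite if and only if $Q\succeq XP^{-1}X$. Writing $X=P^{1/2}YP^{1/2}$ with $Y$ symmetric and setting $C:=P^{-1/2}QP^{-1/2}$, this condition becomes $Y^2\preceq C$. Operator monotonicity of the square root then gives $Y\preceq (Y^2)^{1/2}\preceq C^{1/2}$, so the largest admissible $Y$ is $C^{1/2}$, which corresponds to $X=P^{1/2}C^{1/2}P^{1/2}=P\#Q$ by \eqref{eq:gm}. This simultaneously shows that $P\#Q$ satisfies the constraint and that it dominates every competitor.

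With the characterization in hand, the main argument is short. Applying it to the pair $(P,Q)$ yields
\[
\begin{pmatrix} P & P\#Q \\ P\#Q & Q\end{pmatrix}\succeq 0.
\]
I would then apply $\Phi$ entrywise to the four blocks and argue that the resulting matrix
\[
\begin{pmatrix} \Phi(P) & \Phi(P\#Q) \\ \Phi(P\#Q) & \Phi(Q)\end{pmatrix}
\]
is again positive semidefinite. Granting this, $\Phi(P\#Q)$ is a symmetric matrix satisfying the block-positivity constraint associated with the pair $(\Phi(P),\Phi(Q))$, so the maximality half of the characterization applied to this pair gives exactly $\Phi(P\#Q)\preceq \Phi(P)\#\Phi(Q)$, which is \eqref{eq:lm-gm-ineq}.

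The step I expect to be the main obstacle is the preservation of block positive semidefiniteness under the blockwise action of $\Phi$. For the maps relevant to this paper, namely the congruences $\Phi_j(X)=B_j^\top X B_j$ of \cref{lem:feasible-plm}, this is immediate, since such $\Phi$ are completely positive and hence $\Phi$ acting blockwise preserves the positive semidefinite cone. For a general strictly positive linear map the claim requires more care; the cleanest route is to observe, writing $L:=P\#Q$, that block positivity of the image reduces via a Schur complement to the operator Cauchy--Schwarz inequality $\Phi(L)\Phi(Q)^{-1}\Phi(L)\preceq \Phi(LQ^{-1}L)$, combined with $\Phi(LQ^{-1}L)\preceq\Phi(P)$, where the latter follows from monotonicity of $\Phi$ and the relation $LQ^{-1}L\preceq P$ already produced by the characterization.
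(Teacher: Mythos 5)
Your proof skeleton is exactly the paper's: the paper establishes this inequality (in the more general ``jointly linear'' form, \cref{thm:jlm-on-gm}) by combining the maximal characterization of the geometric mean (\cref{thm:gm-maximal}) with the fact that applying a positive linear map blockwise preserves positive semidefiniteness of $2\times 2$ block matrices with symmetric off-diagonal block (\cref{thm:plm-effect}), and then invoking maximality for the pair $(\Phi(P),\Phi(Q))$. Your Schur-complement/operator-monotonicity proof of the maximal characterization is correct (the paper cites it from Bhatia rather than proving it), and your observation that the blockwise step is immediate for the congruence maps $X\mapsto B_j^\top X B_j$ is also correct, since blockwise congruence is conjugation of the whole block matrix by $\mathrm{diag}(B_j,B_j)$; that completely positive case is all the paper ever needs downstream.

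The gap is in your handling of the blockwise step for a \emph{general} strictly positive linear map, which is what the theorem actually asserts. The ``operator Cauchy--Schwarz'' inequality $\Phi(L)\Phi(Q)^{-1}\Phi(L)\preceq\Phi(LQ^{-1}L)$ that you invoke is, by the very Schur-complement translation you use, precisely the statement that blockwise $\Phi$ preserves positivity of the (always PSD) matrix $\begin{bmatrix} LQ^{-1}L & L\\ L & Q\end{bmatrix}$. So you have reduced the claim for $\begin{bmatrix}P&L\\L&Q\end{bmatrix}$ to the same claim for another matrix of the same type: this is circular unless that inequality is independently available, and its standard proofs (Lieb--Ruskai, Choi) require $2$-positivity, which a positive map need not have. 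The statement is nevertheless true for merely positive $\Phi$, and the fix exploits exactly the symmetry of the off-diagonal block that your argument never uses: given $\begin{bmatrix}P&R\\R&Q\end{bmatrix}\succeq0$ with $P\succ 0$ and $R=R^\top$, write $R=P^{1/2}SP^{1/2}$ with $S$ symmetric, take a spectral decomposition $S=\sum_i\lambda_i w_iw_i^\top$, and set $u_i:=P^{1/2}w_i$; then
\begin{equation*}
\begin{bmatrix}P&R\\R&Q\end{bmatrix}
=\sum_i\begin{bmatrix}1&\lambda_i\\\lambda_i&\lambda_i^2\end{bmatrix}\otimes u_iu_i^\top
+\begin{bmatrix}0&0\\0&1\end{bmatrix}\otimes\bigl(Q-RP^{-1}R\bigr),
\end{equation*}
where $Q-RP^{-1}R\succeq 0$ by the Schur complement. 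This exhibits the block matrix as a sum of Kronecker products of PSD matrices, and blockwise application of $\Phi$ sends each term $g\otimes M$ to $g\otimes\Phi(M)\succeq 0$, so positivity is preserved using only positivity of $\Phi$. With this lemma in place of your Cauchy--Schwarz citation, your argument becomes a complete and self-contained proof of the theorem; the paper itself sidesteps the issue by quoting the blockwise statement as \cref{thm:plm-effect}.
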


%
\noindent
The monotonicity of logdet ($P\preceq Q$ implies $\log\det(P)\leq \log\det(Q)$) 
and the multiplicativity of the determinant, combined with ~\cref{thm:plm-on-gm}, imply the following result.

\begin{corollary}[Geodesic Convexity of the Logarithm of Linear Maps\cite{sra2015conic}, Corollary 12]
If $\Phi:\Sn^{d}_{+}\to\Sn^{d'}_{+}$ is a strictly positive linear map, then $\log\det(\Phi(P))$ is geodesically convex.
\label{thm:log-lm-gc}
\end{corollary}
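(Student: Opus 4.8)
The plan is to verify geodesic convexity through its midpoint characterization. Writing $f(P) := \log\det(\Phi(P))$ for $P \in \Sn^d_{++}$, note that $f$ is well-defined and continuous: since $\Phi$ is a strictly positive linear map, $\Phi(P) \in \Sn^{d'}_{++}$, and $f$ is a composition of the linear map $\Phi$, the determinant, and the logarithm. As recorded in the discussion preceding \cref{thm:plm-on-gm}, for a continuous function it suffices to establish the midpoint inequality $f(P\#Q) \leq \tfrac{1}{2} f(P) + \tfrac{1}{2} f(Q)$ for all $P, Q \in \Sn^d_{++}$; the general inequality along $\#_t$ then follows.

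The key steps, in order, are as follows. First I would apply Ando's theorem (\cref{thm:plm-on-gm}) to the geometric mean, giving $\Phi(P\#Q) \preceq \Phi(P)\#\Phi(Q)$ in the L\"owner order on $\Sn^{d'}_{++}$. Next I would invoke the monotonicity of $\log\det$ under $\preceq$ to obtain $\log\det(\Phi(P\#Q)) \leq \log\det(\Phi(P)\#\Phi(Q))$. Finally, since $\Phi(P), \Phi(Q) \in \Sn^{d'}_{++}$, I would apply the geodesic linearity of $\log\det$ (\cref{thm:det-linear}) at $t = \tfrac{1}{2}$, which evaluates the right-hand side exactly as $\tfrac{1}{2}\log\det(\Phi(P)) + \tfrac{1}{2}\log\det(\Phi(Q))$. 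Chaining these gives $f(P\#Q) \leq \tfrac{1}{2} f(P) + \tfrac{1}{2} f(Q)$, as required.

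Essentially all the work is done by Ando's theorem, so there is no genuinely hard analytic step remaining; the role of \cref{thm:det-linear} is precisely to convert the $\preceq$-inequality produced by Ando into a scalar midpoint inequality without any loss, exploiting that $\log\det$ behaves additively (not merely sub-additively) along geodesics. The one point warranting care is the passage from midpoint convexity to full geodesic convexity, which relies on continuity of $f$ together with a standard dyadic-density argument along the geodesic $P \#_t Q$; since $f$ is manifestly continuous, this reduction applies directly.
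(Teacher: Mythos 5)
Your proof is correct and follows essentially the same route as the paper: Ando's inequality (\cref{thm:plm-on-gm}), monotonicity of $\log\det$ under the L\"owner order, and the exact evaluation of $\log\det$ at the geometric mean, followed by the continuity-based passage from midpoint to full geodesic convexity. The only cosmetic difference is that you invoke \cref{thm:det-linear} to evaluate $\log\det(\Phi(P)\#\Phi(Q))$, whereas the paper appeals directly to the multiplicativity of the determinant --- these are the same fact.
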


%
\noindent
While the proof of~\cref{thm:new_formulation} uses~\cref{thm:log-lm-gc}, it is not enough for the proof of~\cref{thm:Lieb}. 
Instead of geodesic convexity of $\log\det(\Phi(P))$, the joint geodesic convexity of $\log\det(\sum_{j\in[m]} \Phi_j(P_j))$ is needed where $\Phi_j:\Sn^{n_j}_{+}\to\Sn^{n}_{+}$ is a linear map for each $j\in[m]$.
We conclude this section with  the following two results on a maximal characterization of geometric mean and the effect of positive linear maps on positive definiteness of block diagonal matrices. 
%

\begin{theorem}[Maximal Characterization of the Geometric Mean~\cite{bhatia2009positive}, Theorem 4.1.1]
Let $P,Q\in\Sn^{d}_{++}$.
The geometric mean of $P$ and $Q$ can be characterized as follows,
\begin{equ}
P\#Q = \max\Set*{Y\in\Sn^{d}_{++}}{\begin{bmatrix} P & Y\\Y & Q\end{bmatrix}\succeq 0},
\end{equ}
where the maximal element is  with respect to Loewner partial order.
\label{thm:gm-maximal}
\end{theorem}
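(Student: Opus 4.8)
The plan is to translate the block positive semidefinite constraint into a Schur complement condition and then reduce to a normalized case. Since $P\succ 0$, the Schur complement identity gives, for any symmetric $Y$,
\[
\begin{bmatrix} P & Y \\ Y & Q \end{bmatrix}\succeq 0
\iff
Q - Y P^{-1} Y \succeq 0
\iff
Y P^{-1} Y \preceq Q .
\]
Thus the set in the statement consists precisely of the symmetric $Y$ with $Y P^{-1} Y \preceq Q$, and the theorem reduces to showing that $P\#Q$ is the largest such $Y$ in the Loewner order. I would prove the two halves separately: \emph{feasibility} (that $P\#Q$ lies in the set) and \emph{maximality} (that $Y \preceq P\#Q$ for every feasible $Y$). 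Note that proving maximality against all symmetric $Y$ satisfying the block condition is stronger than, and hence suffices for, the maximum over $\Sn^d_{++}$.

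For feasibility I would substitute the definition~\eqref{eq:gm}, set $W := P^{-1/2} Q P^{-1/2}\succ 0$ so that $P\#Q = P^{1/2} W^{1/2} P^{1/2}\succ 0$, and compute directly
\[
(P\#Q)\,P^{-1}\,(P\#Q)
= P^{1/2} W^{1/2} P^{1/2} \cdot P^{-1} \cdot P^{1/2} W^{1/2} P^{1/2}
= P^{1/2} W P^{1/2}
= Q .
\]
Hence the Schur complement vanishes, the block matrix with $Y = P\#Q$ is positive semidefinite, and $P\#Q$ belongs to the set.

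For maximality I would first normalize by the congruence $\diag(P^{-1/2},P^{-1/2})$, which preserves the Loewner order on $2\times 2$ block matrices. Writing $\widetilde{Y} := P^{-1/2} Y P^{-1/2}$ and keeping $W := P^{-1/2} Q P^{-1/2}$, the constraint becomes $\widetilde{Y}^2 \preceq W$, the candidate maximizer $P\#Q$ maps to $W^{1/2}$, and the desired inequality $Y \preceq P\#Q$ becomes $\widetilde{Y} \preceq W^{1/2}$. It therefore remains to establish the matricial fact that, for symmetric $\widetilde{Y}$, the inequality $\widetilde{Y}^2 \preceq W$ implies $\widetilde{Y} \preceq W^{1/2}$.

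This last implication is the crux, and I expect it to be the main obstacle, since the naive "take square roots of both sides" step is invalid for non-commuting matrices. I would handle it in one of two ways. The clean route invokes operator monotonicity of the square root: from $\widetilde{Y}^2 \preceq W$ one gets $|\widetilde{Y}| = (\widetilde{Y}^2)^{1/2} \preceq W^{1/2}$, and since $\widetilde{Y} \preceq |\widetilde{Y}|$ for every symmetric matrix (the eigenvalues of $|\widetilde{Y}|-\widetilde{Y}$ are $|\lambda_i|-\lambda_i\ge 0$), we conclude $\widetilde{Y}\preceq W^{1/2}$. If one prefers to avoid operator monotonicity as a black box, the elementary route is a spectral-radius argument: congruence by $W^{-1/2}$ turns $\widetilde{Y}^2 \preceq W$ into $M^\top M \preceq I$ with $M := \widetilde{Y} W^{-1/2}$, so $\|M\|_{\mathrm{op}}\le 1$ and every eigenvalue of $M$ has magnitude at most $1$; the symmetric matrix $N := W^{-1/4}\widetilde{Y} W^{-1/4}$ satisfies $W^{1/4} N W^{-1/4} = M$, hence $N$ is similar to $M$ and its (real) eigenvalues coincide with those of $M$, so they lie in $[-1,1]$, giving $N\preceq I$, i.e. $\widetilde{Y}\preceq W^{1/2}$ after undoing the congruence. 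Either way completes the proof, with the operator-monotonicity of $t\mapsto t^{1/2}$ being the only nontrivial external ingredient.
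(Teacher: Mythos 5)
Your proof is correct, but note that the paper never proves this statement: it is imported verbatim as Theorem 4.1.1 of Bhatia's book \cite{bhatia2009positive} and used as a black box in the proof of \cref{thm:jlm-on-gm}, so there is no internal proof to compare against. Your argument is complete and is essentially the standard textbook derivation: the Schur-complement reduction of the block condition to $YP^{-1}Y\preceq Q$, the feasibility computation $(P\#Q)P^{-1}(P\#Q)=Q$, the normalization by the congruence $\diag(P^{-1/2},P^{-1/2})$, and the crux implication $\widetilde{Y}^{2}\preceq W\Rightarrow\widetilde{Y}\preceq W^{1/2}$ all check out, and you correctly observe that proving domination over all symmetric feasible $Y$ (not just positive definite ones) suffices. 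The one place a careless reader could go wrong --- taking square roots of a Loewner inequality --- is exactly where you are most careful: the route via operator monotonicity of $t\mapsto t^{1/2}$ together with $\widetilde{Y}\preceq|\widetilde{Y}|$ is the classical one, and your alternative spectral-radius argument (passing to $M:=\widetilde{Y}W^{-1/2}$ with $\|M\|_{\mathrm{op}}\le 1$ and noting that the symmetric matrix $W^{-1/4}\widetilde{Y}W^{-1/4}$ is similar to $M$) is a nice self-contained substitute that avoids invoking L\"{o}wner--Heinz as a black box.
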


\begin{proposition}[Effect of Positive Linear Maps~\cite{bhatia2009positive}, Exercise 3.2.2]
Let $\Phi:\Sn^{d}_{+}\to\Sn^{d'}_{+}$ be a strictly positive linear map and $P,Q,R\in\Sn^{d}_{+}$.
If
\begin{equ}
\begin{bmatrix}
P & R\\
R & Q
\end{bmatrix}
\succeq 0,
\text{ then }
\begin{bmatrix}
\Phi(P) & \Phi(R)\\ 
\Phi(R) & \Phi(Q)
\end{bmatrix}
\succeq 0.
\end{equ}
\label{thm:plm-effect}
\end{proposition}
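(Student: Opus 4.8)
The plan is to reduce the claim to the scalar characterization of positive semidefiniteness and then to control the cross (off-diagonal) term, which is the only real difficulty: the diagonal blocks $\Phi(P),\Phi(Q)$ are automatically positive semidefinite because $\Phi$ is positive, so everything hinges on pairing them correctly against $\Phi(R)$. Writing $N:=\begin{bmatrix}\Phi(P)&\Phi(R)\\\Phi(R)&\Phi(Q)\end{bmatrix}$, it suffices to show $a^\top\Phi(P)a+2a^\top\Phi(R)b+b^\top\Phi(Q)b\ge 0$ for all $a,b\in\R^{d'}$.

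For the strictly positive maps that actually occur in this paper this is immediate. By \cref{lem:feasible-plm} the relevant maps have the form $\Phi(X)=B^\top X B$, and then
\begin{equ}
N=\begin{bmatrix}B^\top&0\\0&B^\top\end{bmatrix}\begin{bmatrix}P&R\\R&Q\end{bmatrix}\begin{bmatrix}B&0\\0&B\end{bmatrix},
\end{equ}
a congruence of the hypothesis $\begin{bmatrix}P&R\\R&Q\end{bmatrix}\succeq0$, hence $N\succeq0$. The same one line works for any map admitting a representation $\Phi(X)=\sum_k C_kXC_k^\top$, by summing the corresponding congruences.

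To obtain the statement for a general strictly positive $\Phi$ I would first normalize to the unital case. Since $\Phi$ is strictly positive, $\Phi(I)\succ0$, so $\Psi(X):=\Phi(I)^{-1/2}\Phi(X)\Phi(I)^{-1/2}$ is a positive unital map, and the block matrix for $\Phi$ is a congruence (by $\mathrm{diag}(\Phi(I)^{1/2},\Phi(I)^{1/2})$) of the block matrix for $\Psi$; thus it is enough to treat $\Psi$. Assuming first $P\succ0$ (the general case follows by replacing $\begin{bmatrix}P&R\\R&Q\end{bmatrix}$ with $\begin{bmatrix}P&R\\R&Q\end{bmatrix}+\eps I$ and letting $\eps\downarrow0$, using that the positive semidefinite cone is closed), a Schur-complement computation shows that $N\succeq0$ is equivalent to $\Psi(Q)\succeq\Psi(R)\Psi(P)^{-1}\Psi(R)$. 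Applying positivity of $\Psi$ to the Schur complement $Q-RP^{-1}R\succeq0$ of the hypothesis gives $\Psi(Q)\succeq\Psi(RP^{-1}R)$, so it remains to prove
\begin{equ}
\Psi(RP^{-1}R)\succeq\Psi(R)\,\Psi(P)^{-1}\,\Psi(R).
\end{equ}

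This last inequality is the main obstacle. It is precisely an operator Jensen inequality: the map $(P,R)\mapsto RP^{-1}R$ on $\Sn^d_{++}\times\Sn^d$ is jointly operator convex, and for a positive unital map operator convexity yields $f(\Psi(\cdot))\preceq\Psi(f(\cdot))$. The symmetry of $R$ is essential here, since it places us in the self-adjoint regime where positive unital maps already suffice; for a non-symmetric off-diagonal block one would instead need genuine $2$-positivity of $\Phi$, which can fail for general positive maps. I expect establishing the joint operator convexity of $RP^{-1}R$ (equivalently, verifying the displayed inequality directly) to be the technical heart of the argument, whereas for the application in this paper it is bypassed entirely by the congruence identity above.
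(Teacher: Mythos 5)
A preliminary remark: the paper does not prove \cref{thm:plm-effect} at all --- it is quoted from Bhatia's book as Exercise 3.2.2 --- so your attempt can only be measured against the standard solution of that exercise. The first part of your proposal is correct and worth keeping: for $\Phi(X)=B^\top XB$, and more generally for any map of the form $\Phi(X)=\sum_k C_kXC_k^\top$, blockwise application of $\Phi$ is a sum of congruences of the hypothesis matrix, hence preserves positive semidefiniteness; and since the maps the paper concretely feeds into this proposition are the congruences $p_jB_j^\top XB_j$ of \cref{lem:feasible-plm}, this one-line case does cover the proof of \cref{thm:Lieb}. (Be aware, though, that \cref{thm:jlm-on-gm} and \cref{lem:log-jlm-gc} are stated for \emph{arbitrary} strictly positive $\Phi_j$, so the paper's results at their stated generality still require the general proposition.) Your unitalization $\Psi(X):=\Phi(I)^{-1/2}\Phi(X)\Phi(I)^{-1/2}$, the perturbation to $P\succ0$, and the Schur-complement reduction of the general case to $\Psi(RP^{-1}R)\succeq\Psi(R)\Psi(P)^{-1}\Psi(R)$ are all also correct.

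The genuine gap is the justification of that last inequality. The Jensen inequality $f(\Psi(A))\preceq\Psi(f(A))$ for positive unital $\Psi$ (Davis--Choi) is a \emph{one-variable} theorem: its proof rests on the spectral decomposition of the single matrix $A$, which turns $\Psi(A)$ into a combination of scalars with positive operator weights. There is no joint spectral decomposition of the non-commuting pair $(P,R)$ and no citable ``two-variable Jensen for positive unital maps''; indeed, for the same function with a non-symmetric second argument, $(P,R)\mapsto R^\top P^{-1}R$, the analogous inequality genuinely requires $2$-positivity and fails for positive maps, as you yourself note --- so any valid argument must use the symmetry of $R$ in an essential, non-formal way. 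Moreover your parenthetical ``equivalently'' conflates two different statements: joint operator convexity of $(P,R)\mapsto RP^{-1}R$ is the classical Lieb--Ruskai/Ando theorem and is already known, yet it does \emph{not} imply the displayed inequality for merely positive unital $\Psi$. Worse, the displayed inequality is precisely the special case $Q=RP^{-1}R$ of \cref{thm:plm-effect} itself (read through Schur complements), so invoking it as an available tool is circular. The missing idea --- and the intended solution of Bhatia's exercise --- is a separable decomposition: assuming $P\succ0$, diagonalize $T:=P^{-1/2}RP^{-1/2}=\sum_i\lambda_iu_iu_i^\top$ (this is exactly where symmetry of $R$ enters) and set $v_i:=P^{1/2}u_i$; then
\[
\begin{bmatrix}P&R\\R&Q\end{bmatrix}
=\sum_i\begin{bmatrix}1&\lambda_i\\ \lambda_i&\lambda_i^2\end{bmatrix}\otimes v_iv_i^\top
+\begin{bmatrix}0&0\\0&1\end{bmatrix}\otimes\left(Q-RP^{-1}R\right),
\]
a positive combination of tensor products of scalar $2\times2$ PSD matrices with PSD matrices. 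Applying $\Phi$ blockwise sends each summand to $\begin{bmatrix}1&\lambda_i\\ \lambda_i&\lambda_i^2\end{bmatrix}\otimes\Phi(v_iv_i^\top)$, respectively $\begin{bmatrix}0&0\\0&1\end{bmatrix}\otimes\Phi\left(Q-RP^{-1}R\right)$, and these are PSD using only positivity of $\Phi$ --- no unitality, no $2$-positivity; the singular-$P$ case then follows by the limiting argument you already describe.
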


\section{A Geodesically Convex Formulation from Operator Scaling} 
\label{sec:os}
In this section, we describe the operator scaling problem and the reduction of Garg {\em et al.} \cite{garg2016algorithmic} from the computation of the Brascasmp-Lieb constant to the computation of the ``capacity'' of a positive operator.

\paragraph{The Operator Scaling Problem and its Geodesic Convexity.} In the operator scaling problem~\cite{GURVITS2004448}, one is given a linear operator $T(X):=\sum_{j\in[m]} T_j^\top X T_j$ through the tuple of matrices $T_j$s and the goal is to find square matrices $L$ and $R$ such that 
\begin{equ}[eq:scaling]
\sum_{j\in[m]} \hat{T}_j^\top \hat{T}_j = I\qquad\text{ and }\qquad\sum_{j\in[m]} \hat{T}_j \hat{T}_j^\top = I,
\end{equ}
where $\hat{T}_j := LT_jR$.
The matrices $L$ and $R$ can be computed by solving the following optimization problem. 
\begin{definition}[Operator Capacity]\label{def:os}
Let $T:\Sn^{d}_{++}\to\Sn^{d'}_{++}$ be a linear operator, then the capacity of $T$ is
\begin{equ}[eq:capacity]
\mathrm{cap}(T) := \inf_{\det(X)=1} \det\left(\frac{d}{d'}T(X)\right).
\end{equ}
\end{definition}

\noindent
In particular, if $X_T^\star$ is a minimizer of~\eqref{eq:capacity} and $Y_T^\star = T(X_T^\star)^{-1}$, then~\eqref{eq:scaling} holds if we let $L:=(Y_T^\star)^{1/2}$ and $R:=(X_T^\star)^{1/2}$; see ~\cite{GURVITS2004448} for details.
Allen-Zhu {\em et al.} \cite{OS2018} proved that the log of the operator capacity function is convex.

\begin{theorem}[Operator Capacity is Geodesically Convex \cite{OS2018}]
If $T$ is a positive operator, then $\log\mathrm{cap}(X)$ is a geodesically convex.
\end{theorem}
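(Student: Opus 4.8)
The plan is to reduce the statement to \cref{thm:log-lm-gc}. The essential object is the objective function of the capacity optimization, namely $X \mapsto \log\det(T(X))$ on $\Sn^d_{++}$. Since $T$ is a positive operator, it is a strictly positive linear map from $\Sn^d_+$ to $\Sn^{d'}_+$, so \cref{thm:log-lm-gc} applies verbatim with $\Phi = T$ and yields that $\log\det(T(X))$ is geodesically convex in $X$. The normalization factor $\frac{d}{d'}$ appearing in \cref{eq:capacity} only contributes the additive constant $d'\log(d/d')$ to $\log\det\!\left(\frac{d}{d'}T(X)\right)$, since $T(X)$ is $d'\times d'$, and an additive constant does not affect geodesic convexity.

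Next I would account for the constraint $\det(X)=1$. Writing
\[
\log\mathrm{cap}(T) = d'\log\!\left(\frac{d}{d'}\right) + \inf_{\det(X)=1}\log\det(T(X)),
\]
one sees that the capacity optimization minimizes a geodesically convex function over the set $\{X \in \Sn^d_{++} : \det(X) = 1\}$. By \cref{thm:det-linear}, $\log\det$ is geodesically linear, so this constraint set is geodesically convex, indeed totally geodesic: for $X,Y$ with $\det(X)=\det(Y)=1$ and any $t\in[0,1]$ we have $\log\det(X\#_t Y) = (1-t)\log\det(X) + t\log\det(Y) = 0$. Hence the capacity is the optimal value of a genuine geodesically convex program, and the objective function inherits geodesic convexity, which is the assertion to be proved.

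The real content is carried entirely by \cref{thm:log-lm-gc}, which itself rests on Ando's inequality \cref{thm:plm-on-gm} for the action of a strictly positive linear map on the geometric mean, combined with the monotonicity and multiplicativity of the determinant; once that machinery is granted, the argument above is mere bookkeeping. The only point that genuinely requires care is the same one handled in \cref{lem:feasible-plm}: one must verify that a positive operator $T$ in the operator-scaling sense is strictly positive in the sense demanded by \cref{thm:log-lm-gc}, i.e., that $T(X)\succ 0$ whenever $X\succ 0$, so that $\log\det(T(X))$ is finite and \cref{thm:log-lm-gc} is applicable throughout the domain. I expect this verification to be the main, though modest, obstacle, and it follows directly from the defining positivity of the operator $T$.
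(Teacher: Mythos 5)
The first thing to say is a non-comparison: the paper never proves this theorem. It is imported from \cite{OS2018} as background for the reduction in Section~\ref{sec:os}, so your argument cannot be checked against an in-paper proof and has to stand on its own. On its merits, your route is sound and is a genuinely nice observation: the machinery the paper builds for Lieb's formulation --- Ando's inequality (\cref{thm:plm-on-gm}) packaged as \cref{thm:log-lm-gc} --- already yields the geodesic convexity underlying operator scaling. Reading the garbled ``$\log\mathrm{cap}(X)$'' as the objective of the capacity program \eqref{eq:capacity} is the right interpretation; the scalar $\nicefrac{d}{d'}$ indeed contributes only the additive constant $d'\log(d/d')$; and noting via \cref{thm:det-linear} that $\{X : \det(X)=1\}$ is totally geodesic is exactly what upgrades ``geodesically convex objective'' to ``geodesically convex program,'' which is the substance of the cited result.

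The one step that is wrong as written is your last one. Strict positivity of $T$ does \emph{not} ``follow directly from the defining positivity of the operator'': an operator $T(X)=\sum_j T_j^\top X T_j$ always maps PSD to PSD, but it maps PD to PD if and only if $\bigcap_j \ker T_j = \{0\}$; with a single Kraus operator $T_1=\diag(1,0)$, for instance, $T$ is positive yet $T(I)$ is singular, and \cref{thm:log-lm-gc} is then inapplicable. Your analogy with \cref{lem:feasible-plm} also does not transfer: there, strict positivity is forced by the \emph{feasibility hypothesis} on the datum through \cref{thm:feasibility}, whereas the present theorem assumes nothing beyond positivity. The gap closes with a short dichotomy that you should make explicit: for any positive linear map, either $T$ is strictly positive, or $\det(T(X))=0$ for \emph{every} $X\succ 0$. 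Indeed, if $T(X_0)v=0$ for some $X_0\succ 0$ and $v\neq 0$, then for arbitrary $X\succ 0$ choose $\lambda>0$ with $X\preceq \lambda X_0$; positivity and linearity give $0\leq v^\top T(X)v\leq \lambda v^\top T(X_0)v=0$, hence $T(X)v=0$. In the degenerate branch $\log\det(T(X))\equiv -\infty$ and $\mathrm{cap}(T)=0$, so there is nothing to prove; in the other branch $T$ is strictly positive and your appeal to \cref{thm:log-lm-gc} is legitimate. (Alternatively, observe that \cref{def:os} already types $T$ as a map $\Sn^{d}_{++}\to\Sn^{d'}_{++}$, i.e., builds strict positivity into the definition --- but then calling this verification the ``main obstacle'' is misplaced.) With that patch, your proof is complete.
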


\paragraph{The Reduction.} Let $(\B,\p)$ be a Brascamp-Lieb  datum with $B_j\in\R^{n_j\times n}$ for each $j\in[m]$. 
Garg {\em et al.} \cite{garg2016algorithmic} proved that if the exponent $p=(p_j)_{j\in[m]}$ is a rational vector, then one can construct an operator scaling problem from $(\B,\p)$.
Let $p_j= \nicefrac{c_j}{c}$ where $c_j$s are non-negative integers, and $c$ is a positive integer, the common denominator for all the $p_j$s.
 Their reduction, outlined  below, results in an operator $T_{\B,\p}:\Sn^{nc}_{++}\to\Sn^{n}_{++}$ with the property that $\mathrm{cap}(T_{\B,\p})=\nicefrac{1}{\BL(\B,\p)^2}$. 

The operator $T_{\B,\p}$ is constructed with $c_j$ copies of the matrix $B_j$ for each $j\in[m]$.
In order to easily refer these copies, let us define $m':=\sum_{j\in[m]} c_j$, and the function $\delta:[m']\to[m]$.
$\delta(i)$ is defined as the integer $j$ such that,
\begin{equ}
\sum_{k<j} c_k < i \leq \sum_{k\leq j} c_k.
\end{equ}
Let $Z_{ij}$ be an $n_{\delta(i)}\times n$  matrix all of whose entries are zero of size when $\delta(i)\neq j$ and $B_{\delta(i)}$ if $\delta(i)=j$, for $i,j\in[m']$.
Define $nc \times n$ matrices $T_j$ for $j\in[m]$ as follows:
\begin{equ}
T_j := \begin{bmatrix}
Z_{1j}\\
\vdots\\
Z_{m'j}
\end{bmatrix},
\end{equ}
and define the linear operator $T_{\B,\p}:\Sn^{nc}_{++}\to\Sn^{n}_{++}$ as 
\begin{equ}[eq:OS-operator]
T_{\B,\p}(X) := \sum_{j\in[m']} T_j^\top X T_j.
\end{equ}

\begin{theorem}[Reduction from Brascamp-Lieb to Operator Scaling\cite{garg2016algorithmic}, Lemma 4.4.]
Let $(\B,\,\p)$ be a Brascamp-Lieb a datum with $B_j\in\R^{n_j\times n}$ and $p_j =c_j/c$ where $c,c_j\in\Z_+$ for each $j\in[m]$.
 The capacity of the operator $T_{\B,\p}$ defined in~\eqref{eq:OS-operator} satisfies $\mathrm{cap}(T_{\B,\p})=\nicefrac{1}{\BL(\B,\p)^2}$.
\end{theorem}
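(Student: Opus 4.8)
The plan is to show that, once the scale constraint is removed, the capacity optimization over all of $\Sn^{nc}_{++}$ collapses onto the much smaller family of block-diagonal matrices built from an $m$-tuple $A=(A_j)_{j\in[m]}$, on which the objective is exactly Lieb's ratio from \cref{eq:BLconstant}; the identity $\mathrm{cap}(T_{\B,\p})=1/\BL(\B,\p)^2$ then drops out of \cref{thm:BLconstant}. First I would recast \eqref{eq:capacity} in scale-invariant form. Writing $d=nc$ and $d'=n$, the map $T_{\B,\p}$ is linear into $\Sn^n_{++}$, so $\det(\tfrac{d}{d'}T_{\B,\p}(X))$ is homogeneous of degree $n$ in $X$ while $\det X$ is homogeneous of degree $nc$; normalizing $X$ by $(\det X)^{1/(nc)}$ turns the constrained infimum into
\begin{equ}
\mathrm{cap}(T_{\B,\p})=\inf_{X\in\Sn^{nc}_{++}}\frac{\det\!\big(\tfrac{d}{d'}T_{\B,\p}(X)\big)}{(\det X)^{1/c}}.
\end{equ}
The key structural observation is that each of the copies used to build $T_{\B,\p}$ carries $B_{\delta(i)}$ in a single block-row $i\in[m']$ and zeros elsewhere, so by \eqref{eq:OS-operator} the operator satisfies $T_{\B,\p}(X)=\sum_{i\in[m']}B_{\delta(i)}^\top X_{ii}B_{\delta(i)}=\sum_{j\in[m]}B_j^\top\big(\sum_{i\in I_j}X_{ii}\big)B_j$, where $X_{ii}$ is the $i$-th diagonal block of $X$ and $I_j:=\Set*{i}{\delta(i)=j}$ has size $c_j$. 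In particular, the numerator depends on $X$ only through its diagonal blocks.

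Next I would shrink the domain in two steps, each of which can only lower the objective. \emph{Block-diagonalization:} replacing $X$ by the matrix $X_D$ obtained by zeroing all off-diagonal blocks leaves the numerator unchanged, since it sees only the $X_{ii}$, while $\det X\le\det X_D$ by Fischer's inequality; hence the objective does not increase. \emph{Within-type symmetrization:} for block-diagonal $X_D$, replacing each $X_{ii}$ with $i\in I_j$ by the average $A_j:=\tfrac1{c_j}\sum_{i\in I_j}X_{ii}$ again leaves the numerator unchanged, because $\sum_{i\in I_j}X_{ii}=c_jA_j$ is preserved; and by concavity of $\log\det$ (the determinant AM--GM bound $\det(\tfrac1{c_j}\sum_{i\in I_j}X_{ii})^{c_j}\ge\prod_{i\in I_j}\det X_{ii}$) the symmetrized matrix has determinant at least $\prod_j\det(A_j)^{c_j}\ge\det X_D$, so the objective again does not increase. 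After both reductions the candidate is parametrized by $A=(A_j)_{j\in[m]}$ with $A_j\in\Sn^{n_j}_{++}$, and there $T_{\B,\p}(X)=\sum_{j\in[m]}c_jB_j^\top A_jB_j=c\sum_{j\in[m]}p_jB_j^\top A_jB_j$ while $(\det X)^{1/c}=\prod_{j\in[m]}\det(A_j)^{p_j}$, using $p_j=c_j/c$.

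Substituting, the objective on this family equals $\det\!\big(\sum_{j}p_jB_j^\top A_jB_j\big)\big/\prod_j\det(A_j)^{p_j}$, up to the positive constant fixed by the normalization in \cref{def:os} together with the dimension count $nc=\sum_j c_jn_j$ from \cref{thm:feasibility}; the prefactor in \cref{def:os} is exactly the one that makes this constant equal to $1$, consistent with $\mathrm{cap}=1/\BL(\B,\p)^2$. The two reductions give $\mathrm{cap}(T_{\B,\p})\ge\inf_{A}\det(\sum_j p_jB_j^\top A_jB_j)/\prod_j\det(A_j)^{p_j}$, while evaluating the scale-invariant objective at the block-diagonal matrix associated to an arbitrary $A$ gives the reverse inequality; hence $\mathrm{cap}(T_{\B,\p})$ equals this infimum, which is $1/\BL(\B,\p)^2$ by \cref{thm:BLconstant}. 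I expect the main obstacle to be the two collapse steps---certifying that neither the off-diagonal blocks (handled by Fischer) nor unequal within-type blocks (handled by strict concavity of $\log\det$) can improve the objective---together with the bookkeeping that the normalization constant cancels exactly; the symmetrization step is the delicate one, since it is where concavity of $\log\det$ forces equal blocks and where feasibility, through the dimension identity, reconciles the exponents $p_j$ with the block sizes $n_j$.
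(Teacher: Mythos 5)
Your strategy is the natural one and, structurally, it works; note that the paper itself offers no proof of this statement (it is quoted from Garg \emph{et al.}, Lemma 4.4), so the only thing to check is whether your argument is correct against the construction as presented. The skeleton is right: pass to the scale-invariant form of the capacity, observe that $T_{\B,\p}(X)=\sum_{i\in[m']}B_{\delta(i)}^\top X_{ii}B_{\delta(i)}$ sees only the diagonal blocks of $X$, kill the off-diagonal blocks with Fischer's inequality, average the blocks within each class $I_j$ using log-concavity of $\det$, and then read off Lieb's ratio and invoke \cref{thm:BLconstant}. Both collapse steps are justified exactly as you state them, the two-sided comparison (subfamily restriction versus the monotone reductions) correctly yields equality of the infima, and you resolved the paper's garbled indexing of the $Z_{ij}$'s into the only interpretation under which the construction is consistent.

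The genuine gap is the one step you asserted instead of computed: the normalization constant is \emph{not} $1$ under \cref{def:os} as written, so the claim that ``the prefactor in \cref{def:os} is exactly the one that makes this constant equal to $1$'' is false, and this bookkeeping is precisely where the content of the statement lies. Carry out the evaluation on your reduced family: with $d=nc$ and $d'=n$ one has $\frac{d}{d'}=c$, while $T_{\B,\p}(X)=\sum_{j\in[m]}c_jB_j^\top A_jB_j=c\sum_{j\in[m]}p_jB_j^\top A_jB_j$, so
\begin{equ}
\det\Bigl(\tfrac{d}{d'}\,T_{\B,\p}(X)\Bigr)
=\det\Bigl(c^2\sum_{j\in[m]}p_jB_j^\top A_jB_j\Bigr)
=c^{2n}\det\Bigl(\sum_{j\in[m]}p_jB_j^\top A_jB_j\Bigr),
\end{equ}
whereas $(\det X)^{1/c}=\prod_{j\in[m]}\det(A_j)^{p_j}$. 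Your argument therefore proves $\mathrm{cap}(T_{\B,\p})=c^{2n}/\BL(\B,\p)^2$, not $1/\BL(\B,\p)^2$. The stated identity requires the prefactor $\frac{d'}{d}=\frac{1}{c}$ instead of $\frac{d}{d'}$ (equivalently, Kraus operators $\frac{1}{\sqrt{c}}T_i$, which is the normalization under which Garg \emph{et al.}'s lemma holds as stated); with that correction your proof goes through verbatim. This is arguably a typo in \cref{def:os} rather than a flaw in your plan, but a complete proof must detect the mismatch rather than declare that the constants cancel. A smaller point: your dimension count $nc=\sum_{j\in[m]}c_jn_j$, which is needed even for the $T_i$ to be $nc\times n$ matrices, presupposes the scaling condition $n=\sum_{j\in[m]}p_jn_j$ of \cref{thm:feasibility}; the theorem as stated does not assume it, so you should flag it explicitly (when it fails the datum is infeasible and that degenerate case needs separate words).
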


%
\noindent
While this gives a geodesically log-concave formulation to compute the Brascamp-Lieb constant, the algorithm of Allen-Zhu {\em et al.}~\cite{OS2018} is not enough to obtain a polynomial time algorithm to compute the capacity of this operator as the dimension of the optimization problem is exponentially large in the bit complexity of the Brascamp-Lieb datum.

\section{Proof of \cref{thm:Lieb}}
\label{sec:original_formulation}

Let $(\B,\p)$ be a feasible Brascamp-Lieb datum with $B_j\in\R^{n_j\times n}$ and $A:=(A_j)_{j\in[m]}$ with $A_j\in\Sn^{n_j}_{+}$ be the input of $\BL(\B,\p;A)$ as defined in~\eqref{eq:BLconstant}.
To prove the joint geodesic convexity of $\BL(\B,\p;A)$ with respect to $A$ we extend~\cref{thm:log-lm-gc} and~\cref{thm:plm-on-gm} from linear maps to ``jointly linear maps''.
We use the term jointly linear maps to refer to multivariable functions of the form $\sum_{j\in[m]} \Phi_j(P_j)$ where each $\Phi_j$ is a strictly positive linear map for each $j\in[m]$.
In particular, the term $\sum_{j\in[m]} p_j B_j^\top A_j B_j$ in~\eqref{eq:BLconstant} is a jointly linear map.

The extension of~\cref{thm:plm-on-gm} is presented in~\cref{thm:jlm-on-gm} and its proof is based on the maximal characterization of geometric mean (\cref{thm:gm-maximal}) and the effect of positive linear maps on the positive definiteness of block matrices (\cref{thm:plm-effect}).
We follow the proof of~\cref{thm:plm-on-gm} for each $\Phi_j$, but instead of concluding $\Phi_j(P_j\#Q_j)\preceq \Phi_j(P_j)\#\Phi_j(Q_j)$ from the maximality of geometric mean, we sum the resulting inequalities.
Subsequently,~\cref{thm:jlm-on-gm} follows from the maximality of geometric mean.
\cref{lem:log-jlm-gc} is an extension of ~\cref{thm:log-lm-gc} and follows directly from~\cref{thm:jlm-on-gm}.

\begin{definition}[Jointly linear map]
Let $\Phi:\Sn^{n_1}_{+}\times\cdots\times\Sn^{n_m}_{+}\to\Sn^n_{+}$.
We say that $\Phi$ is a jointly linear map if there exist strictly positive linear maps  $\Phi_j:\Sn^{n_j}_{+}\to\Sn^n_{+}$ such that 
\begin{equ}[eq:jointly-linear]
\Phi(P_1,\hdots,P_k) := \sum_{j\in[k]} \Phi_j(P_j).
\end{equ}
\end{definition}

\noindent
Now, we state the extension of~\cref{thm:plm-on-gm}.
%

\begin{theorem}[Effect of Jointly Linear Maps over Geometric Means]
Let $\Phi:\Sn^{n_1}_{+}\times \cdots\times \Sn^{n_m}_{+}\to \Sn^n_{+}$ be a jointly linear map.
Then, 
\begin{equ} 
\Phi(G) \preceq \Phi(P) \# \Phi(Q) 
\end{equ}
where $P:=(P_j)_{j\in[m]}$, $Q:=(Q_j)_{j\in[m]}$, and $G:=(G_j)_{j\in[m]}$ with $P_j,Q_j\in\Sn^{n_j}_{++}$ and $G_j := P_j \# Q_j$.
\label{thm:jlm-on-gm}
\end{theorem}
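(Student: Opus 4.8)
The plan is to mimic the proof of Ando's single-map result (\cref{thm:plm-on-gm}) but keep the block-matrix certificates separate for each coordinate and only combine them at the very end via the maximal characterization of the geometric mean. First I would invoke \cref{thm:gm-maximal} applied to each pair $(P_j, Q_j)$: since $G_j = P_j \# Q_j$ is the geometric mean, the block matrix
\begin{equ}
\begin{bmatrix} P_j & G_j \\ G_j & Q_j \end{bmatrix} \succeq 0
\end{equ}
for every $j \in [m]$. This is the input certificate for coordinate $j$.

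Next I would push each certificate through the corresponding strictly positive linear map $\Phi_j$ using \cref{thm:plm-effect}. Applying that proposition with $R = G_j$ yields
\begin{equ}
\begin{bmatrix} \Phi_j(P_j) & \Phi_j(G_j) \\ \Phi_j(G_j) & \Phi_j(Q_j) \end{bmatrix} \succeq 0
\end{equ}
for each $j$. The key step is then to sum these $m$ block inequalities. Because the sum of PSD matrices is PSD, and because block-matrix addition is entrywise on the blocks, I obtain
\begin{equ}
\begin{bmatrix} \sum_j \Phi_j(P_j) & \sum_j \Phi_j(G_j) \\ \sum_j \Phi_j(G_j) & \sum_j \Phi_j(Q_j) \end{bmatrix} = \begin{bmatrix} \Phi(P) & \Phi(G) \\ \Phi(G) & \Phi(Q) \end{bmatrix} \succeq 0,
\end{equ}
where I have used the definition \eqref{eq:jointly-linear} of the jointly linear map $\Phi$ to identify the aggregated blocks.

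Finally I would apply the maximal characterization \cref{thm:gm-maximal} in the reverse direction, now to the pair $(\Phi(P), \Phi(Q))$: since the matrix $\Phi(G)$ sits in the off-diagonal blocks of a PSD block matrix with diagonal blocks $\Phi(P)$ and $\Phi(Q)$, it is dominated in the Loewner order by the maximal such element, which is exactly $\Phi(P) \# \Phi(Q)$. This gives $\Phi(G) \preceq \Phi(P) \# \Phi(Q)$, the desired conclusion. One should note that for \cref{thm:gm-maximal} and \cref{thm:plm-effect} to apply cleanly we need the relevant matrices to be positive definite; since $P_j, Q_j \in \Sn^{n_j}_{++}$ and each $\Phi_j$ is strictly positive, the diagonal blocks $\Phi(P)$ and $\Phi(Q)$ are positive definite, so the characterization applies without boundary issues.

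I do not anticipate a serious obstacle, as the argument is essentially a direct generalization of the single-map proof; the only point requiring mild care is the summation step, where one must verify that summing the individual block certificates correctly reproduces the block matrix built from the aggregated map $\Phi$ — but this is immediate from the additive definition of $\Phi$ and the fact that the off-diagonal block $\Phi(G) = \sum_j \Phi_j(G_j)$ uses the same $G_j = P_j \# Q_j$ in every coordinate. The crux, and the reason the statement is true at all, is that the maximal characterization lets us avoid taking geometric means of individual summands (which would not add up), replacing it by a single PSD block constraint that is preserved under summation.
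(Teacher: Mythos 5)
Your proposal is correct and follows essentially the same route as the paper's own proof: applying \cref{thm:gm-maximal} coordinatewise, pushing each block certificate through $\Phi_j$ via \cref{thm:plm-effect}, summing the PSD block matrices, and invoking the maximal characterization once more to conclude $\Phi(G) \preceq \Phi(P)\#\Phi(Q)$. No gaps; the argument matches the paper step for step.
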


\noindent
The following is a corollary of the theorem above and a generalization of~\cref{thm:log-lm-gc}.

\begin{corollary}[Joint Geodesic Convexity of Logarithm of  Jointly Linear Maps]
If $\Phi:\Sn^{n_1}_{+}\times\cdots\times\Sn^{n_m}_{+}\to\Sn^n_{+}$ is a jointly linear map, then 
\begin{equ}[eq:c0]
g(P_1,\hdots,P_m):=\log\det(\Phi(P_1,\hdots,P_m))
\end{equ}
is jointly geodesically convex in $\Sn^{n_1}_{++}\times\cdots\times\Sn^{n_m}_{++}$.
\label{lem:log-jlm-gc}
\end{corollary}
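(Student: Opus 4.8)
The plan is to reduce the joint geodesic convexity of $g$ to the midpoint case and then chain together \cref{thm:jlm-on-gm}, the monotonicity of $\log\det$, and the geodesic linearity of $\log\det$ from \cref{thm:det-linear}. First I would record that $g$ is well-defined and continuous on $\Sn^{n_1}_{++}\times\cdots\times\Sn^{n_m}_{++}$: since each $\Phi_j$ is strictly positive and $m\geq 1$, the matrix $\Phi(P_1,\dots,P_m)=\sum_{j\in[m]}\Phi_j(P_j)$ is positive definite whenever every $P_j$ is positive definite, so $\det(\Phi(\cdot))>0$ and the logarithm is finite; continuity is then inherited from the linearity of the $\Phi_j$ together with the continuity of $\det$ and $\log$. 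Because the product manifold carries the product metric, the geodesic joining $P:=(P_j)_{j\in[m]}$ and $Q:=(Q_j)_{j\in[m]}$ is the coordinatewise curve $t\mapsto(P_1\#_tQ_1,\dots,P_m\#_tQ_m)$, so joint geodesic convexity is precisely the statement that $g$ evaluated along this curve is dominated by the corresponding affine combination of $g(P)$ and $g(Q)$.

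Next, invoking the reduction from the midpoint inequality to full geodesic convexity for continuous functions (stated just after \eqref{eq:gm}), it suffices to establish $g(G)\leq \tfrac12 g(P)+\tfrac12 g(Q)$, where $G:=(G_j)_{j\in[m]}$ with $G_j:=P_j\#Q_j$. Here the real work is already done by \cref{thm:jlm-on-gm}, which gives $\Phi(G)\preceq \Phi(P)\#\Phi(Q)$ in the Loewner order. Applying the monotonicity of $\log\det$ (that $R\preceq S$ implies $\log\det R\leq\log\det S$) then yields $g(G)=\log\det(\Phi(G))\leq \log\det(\Phi(P)\#\Phi(Q))$.

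Finally I would evaluate the right-hand side using \cref{thm:det-linear}: since $\Phi(P)\#\Phi(Q)=\Phi(P)\#_{1/2}\Phi(Q)$ is the geodesic midpoint of $\Phi(P)$ and $\Phi(Q)$ in $\Sn^n_{++}$, geodesic linearity of $\log\det$ gives $\log\det(\Phi(P)\#\Phi(Q))=\tfrac12\log\det(\Phi(P))+\tfrac12\log\det(\Phi(Q))=\tfrac12 g(P)+\tfrac12 g(Q)$. Combining the last two displays produces the midpoint inequality, which the continuity reduction upgrades to full joint geodesic convexity. I do not anticipate a genuine obstacle: all the difficulty has been absorbed into \cref{thm:jlm-on-gm}, and what remains is a monotonicity bound followed by the exact linearity of $\log\det$ along geodesics. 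The only points requiring care are confirming strict positivity of the $\Phi_j$ so that $g$ stays real-valued, and observing that the product geodesic is the coordinatewise one, so that \cref{thm:jlm-on-gm} applies with $G_j=P_j\#Q_j$.
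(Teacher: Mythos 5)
Your proposal is correct and follows essentially the same route as the paper's proof: reduce to the midpoint inequality via continuity, apply \cref{thm:jlm-on-gm} to get $\Phi(G)\preceq\Phi(P)\#\Phi(Q)$, then use monotonicity of $\log\det$ and evaluate $\log\det$ at the geometric mean. The only cosmetic difference is that you cite the geodesic linearity of $\log\det$ (\cref{thm:det-linear}) for the final evaluation, where the paper invokes multiplicativity of the determinant directly --- these are the same fact, since \cref{thm:det-linear} at $t=\nicefrac{1}{2}$ is proved exactly by that multiplicativity.
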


\begin{proof}[Proof of~\cref{lem:log-jlm-gc}]
We show that $g$ is jointly geodesically mid-point concave.
\cref{thm:jlm-on-gm} implies that 
\begin{equ}[eq:c1]
\Phi(G)\preceq\Phi(P)\#\Phi(Q)
\end{equ}
for any $P:=(P_j)_{j\in[m]}$ and $Q:=(Q_j)_{j\in[m]}$ with $P_j,Q_j\in\Sn^{n_j}_{+}$.
Therefore,
\begin{align*}
g(G)
	\stackrel{\eqref{eq:c0}}{=}& \log\det(\Phi(G))\\
	\leq & \log\det(\Phi(P)\#\Phi(Q))&(\text{monotonicity of $\log\det$ and~\eqref{eq:c1}})\\
	\leq & \frac{1}{2}\log\det(\Phi(P))+\frac{1}{2}\log\det(\Phi(Q))&(\text{multiplicativity of $\det$})\\
	\stackrel{\eqref{eq:c0}}{=} & \frac{1}{2}\left(g(P)+g(Q)\right).
\end{align*}
Thus, $g$ satisfies mid-point geodesic convexity.
Consequently, we establish the geodesic convexity of $g$ using the continuity of $g$.
\end{proof}

\noindent
The proof of~\cref{thm:Lieb} is a simple application of~\cref{lem:log-jlm-gc} and~\cref{thm:det-linear}.

\begin{proof}[Proof of~\cref{thm:Lieb}]
We show that $\BL(\B,\p;A)$ is jointly geodesically mid-point log-concave with respect to $A$.
In other words, we show that for arbitrary $P=(P_j)_{j\in[m]}$, $Q=(Q_j)_{j\in[m]}$
\begin{equ}
-\log \BL(\B,\p;G) \leq -\frac{1}{2}\left(\log \BL(\B,\p;P)+\log \BL(\B,\p;Q)\right)
\end{equ}
 where $G=(G_j)_{j\in[m]}$ with $G_j:=P_j\#Q_j$, being the midpoint of geodesic combining $P_j$ to $Q_j$.
This implies that $\BL(\B,\p;A)$ is jointly geodesically log-concave with respect to $A$ due to the continuity of $\BL(\B,\p;A)$ with respect to $A$.

Let $\Phi_j(P_j) := p_j B_j^\top P_j B_j$.
$\Phi_j$ is strictly positive linear map by~\cref{lem:feasible-plm}.
Then, $\Phi(P) := \sum\limits_{j\in[m]} p_j B_j^\top P_j B_j$ is jointly linear, as $\Phi(P) = \sum\limits_{j\in[m]} \Phi_j(P_j)$.
Hence, $\log\det(\Phi(P))$ is jointly geodesically convex by~\cref{lem:log-jlm-gc}.
Also, $\log\det(X)$ is geodesically linear (\cref{thm:det-linear}).
Thus, for any $P:=(P_j)_{j\in[m]}$, $Q:=(Q_j)_{j\in[m]}$ and $G:=(G_j)_{j\in[m]}$ with $G_j := P_j\# Q_j$ we have
\begin{align*}
-\log \BL(\B,\p;G)
	\stackrel{\eqref{eq:BLconstant}}{=}& \frac{1}{2}\left(\log\det(\Phi(G)) - \sum_{j\in[m]} p_j \log\det(G_j)\right)\\
	\stackrel{\eqref{eq:c0}}{\leq}& \frac{1}{2}\left(\frac{1}{2}\left(\log\det(\Phi(P))+\log\det(\Phi(Q))\right) - \sum_{j\in[m]} p_j \log\det(G_j)\right)\\
	=&\frac{1}{2}\left(\frac{1}{2}\left(\log\det(\Phi(P))-\sum_{j\in[m]}p_j\log\det(P_j)\right)\right.\\
	&\qquad\qquad\left.+\frac{1}{2}\left(\log\det(\Phi(Q))-\sum_{j\in[m]}p_j\log\det(Q_j)\right)\right)&(\text{\cref{thm:det-linear}})\\
	\stackrel{\eqref{eq:BLconstant}}{=}& -\frac{1}{2}\left(\log\det\BL(\B,\p;P)+\log\det\BL(\B,\p;Q)\right).
\end{align*}  
This concludes the proof.
\end{proof}

\noindent
Now we prove~\cref{thm:jlm-on-gm}.
This proof is based on the proof of~\cref{thm:plm-on-gm} and depends on the maximality of geometric mean (\cref{thm:gm-maximal}) and effects of positive linear maps on block matrices (\cref{thm:plm-effect}).

\begin{proof}[Proof of~\cref{thm:jlm-on-gm}]
$\Phi$ is a jointly linear map by the assumption.
Thus, there exist linear maps $\Phi_j:\Sn^{n_j}_{+}\to \Sn^n_{+}$ such that $\Phi(P) = \sum_{j\in[m]} \Phi_j(P_j)$.
\cref{thm:gm-maximal} implies for each $j\in[m]$,
\begin{equ}
0\preceq
\begin{bmatrix}
P_j&G_j\\
G_j&Q_j
\end{bmatrix}.
\end{equ}
Since $\Phi_j$'s are strictly positive linear maps,~\cref{thm:plm-effect} implies that for each $j\in[m]$,
\begin{equ}
0\preceq
\begin{bmatrix}
\Phi_j(P_j)&\Phi_j(G_j)\\
\Phi_j(G_j)&\Phi_j(Q_j)
\end{bmatrix}.
\end{equ}
The dimension of these block matrices is $2n\times 2n$ for each $j\in[m]$.
Thus we can sum these inequalities and the summation leads to
\begin{equs}
0
	\preceq& \sum_{j\in[m]}
\begin{bmatrix}
\Phi_j(P_j)  & \Phi_j(G_j)\\
\Phi_j(G_j) &  \Phi_j(Q_j)
\end{bmatrix}\\
	=&
\begin{bmatrix}
\sum\limits_{j\in[m]}(\Phi_j(P_j)  	& \sum\limits_{j\in[m]} \Phi_j(G_j)\\
\sum\limits_{j\in[m]} \Phi_j(G_j)	& \sum\limits_{j\in[m]} \Phi_j(Q_j) 
\end{bmatrix}\\
	\stackrel{\eqref{eq:jointly-linear}}{=}&
\begin{bmatrix}
\Phi(P) &\Phi(G)\\
\Phi(G) &\Phi(Q)
\end{bmatrix}.\label{eq:joint-ineq}
\end{equs}
\cref{thm:gm-maximal} and~\eqref{eq:joint-ineq} imply that $\Phi(G) \preceq \Phi(P)\#\Phi(Q)$.
\end{proof}

\section{Proof of~\cref{thm:new_formulation}}
\label{sec:new_formulation}

Let $(\B,\p)$ be a feasible Brascamp-Lieb datum with $B_j\in\R^{n_j\times n}$.
Let $A:=(A_j)_{j\in[m]}$ with $A_j\in\Sn^{n_j}_{+}$ be the input of $\BL(\B,\p;A)$ as defined in~\eqref{eq:BLconstant}.
The proof of~\cref{thm:new_formulation} first establishes the geodesic concavity of $F_{\B,\p}$ as defined in~\eqref{eq:F} when $(\B,\p)$ is feasible.
Next, it establishes the relation between global maximizers of $F_{\B,\p}(X)$ and global maximizers of $\BL(\B,\p;A)$, as well as the relation between $\sup_{X\in\Sn^{n}_{++}} F_{\B,\p}(X)$ and $\BL(\B,\p)$ when $(\B,\p)$ is simple.
Feasibility of $(\B,\p)$ implies the linear maps $B_j X B_j^\top$ are strictly positive linear for each $j\in[m]$ (\cref{lem:feasible-plm}).
Consequently, $-\log\det(B_j X B_j^\top)$ is geodesically concave by~\cref{thm:log-lm-gc} for each $j\in[m]$.
Also, $\log\det(X)$ is geodesically concave by~\cref{thm:det-linear}.
Thus, $F_{\B,\p}(X)$ is geodesically concave as a sum of geodesically concave functions with non-negative coefficients.

The geodesic concavity of $F_{\B,\p}$ implies that any local maximum is also a global maximum (\cref{thm:gc-minimizers}).
Consequently, we investigate the points where all directional derivatives of $F_{\B,\p}$ vanish, the critical points of $F_{\B,\p}$.
A simple calculation involving the first derivative shows that any critical point $X$ of $F_{\B,\p}$ should satisfy
\begin{equ}
X^{-1} = \sum_{j\in[m]} p_j B_j^\top (B_j X B_j)^{-1} B_j.
\end{equ}
\cref{thm:EquivalencetoGeometric} implies that we can construct a global maximizer of $\BL(\B,\p;A)$  from $X$ by setting $A_j:=(B_j XB_j)^{-1}$.
Furthermore, we can construct a critical point of $F_{\B,\p}$ using a global maximizer of $\BL(\B,\p;A)$ by setting $X:=(\sum_{j\in[m]} p_j B_j^\top A_j B_j)^{-1}$.
\cref{thm:EquivalencetoGeometric} guarantees the existence of a global maximizer of $\BL(\B,\p;A)$ if $(\B,\p)$ is simple.
Thus, if $(\B,\p)$ is simple, then $\sup_X F_{\B,\p}(X)$ should be attained.
We can deduce $\sup_X F_{\B,\p}(X)=2\log\BL(\B,\p)$ from the construction of $F_{\B,\p}$ and the relation between maximizers of $F_{\B,\p}$ and $\BL(\B,\p;A)$.

The second part of the proof~\cref{thm:new_formulation} depends on well-known identities from matrix calculus.
We present these identities for the convenience of the reader, and refer the interested reader to the matrix cookbook~\cite{petersen2008matrix} for more details.

\begin{proposition}
Let $X(t), Y(t)$ be a differentiable function from $\R$ to $d\times d$ invertible symmetric matrices.
Let $U\in \R^{d'\times d}$, $V\in R^{d\times d''}$, $W\in\R^{d\times d}$ be matrices which do not depend on $t$.
Then, the following identities hold:
\begin{equs}
\frac{d \log\det(X(t))}{dt} 	=& \tr\left(X(t)^{-1} \frac{d X(t)}{dt}\right)\label{eq:d-logdet}\\
\frac{d UX(t)V}{dt}		=& U\frac{d X(t)}{dt} V\label{eq:d-conj}\\
\frac{d tW}{dt}			=& W.\label{eq:d-linear}\\
\end{equs}
\end{proposition}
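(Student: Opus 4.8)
The plan is to prove the three identities directly from the definitions, treating them as standard facts of matrix calculus; the last two are immediate and the first (Jacobi's formula) carries essentially all the content. For \eqref{eq:d-linear}, I would simply note that the $(i,j)$ entry of $tW$ equals $t\,W_{ij}$, a scalar linear function of $t$; differentiating entrywise gives $W_{ij}$, so $\frac{d(tW)}{dt}=W$. For \eqref{eq:d-conj}, I would use that the entries of $UX(t)V$ are the fixed linear combinations $\sum_{k,l} U_{ik}\,X(t)_{kl}\,V_{lj}$ of the entries of $X(t)$, with coefficients $U_{ik}V_{lj}$ that do not depend on $t$. Since differentiation is linear and commutes with finite linear combinations having constant coefficients, differentiating entrywise and reassembling yields $U\frac{dX(t)}{dt}V$; equivalently, matrix multiplication is bilinear, so the ordinary product rule applied to the constant factors $U$ and $V$ gives the claim.

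The main step is \eqref{eq:d-logdet}, which I would establish by a first-order perturbation expansion of the determinant. Writing $H(t):=\frac{dX(t)}{dt}$, for small $h$ we have $X(t+h)=X(t)+hH(t)+o(h)$, and, factoring out $X(t)$,
\[
\det(X(t+h)) = \det(X(t))\,\det\bigl(I + h\,X(t)^{-1}H(t) + o(h)\bigr).
\]
Expanding $\det(I+hM)=1+h\,\tr(M)+O(h^2)$ gives $\det(X(t+h)) = \det(X(t))\bigl(1 + h\,\tr(X(t)^{-1}H(t)) + o(h)\bigr)$. Taking logarithms and using $\log(1+x)=x+o(x)$ yields
\[
\log\det(X(t+h)) = \log\det(X(t)) + h\,\tr\bigl(X(t)^{-1}H(t)\bigr) + o(h),
\]
so dividing by $h$ and letting $h\to 0$ produces \eqref{eq:d-logdet}. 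The invertibility of $X(t)$ assumed in the statement is exactly what lets us factor out $X(t)$ and guarantees $X(t)^{-1}$ exists.

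The only place requiring care is the expansion $\det(I+hM)=1+h\,\tr(M)+O(h^2)$, which I would either quote as standard or justify by writing $\det(I+hM)$ as a sum over permutations and collecting the terms linear in $h$: only the diagonal product $\prod_i(1+hM_{ii})$ contributes to first order, and its linear term is $h\sum_i M_{ii}=h\,\tr(M)$. None of these steps presents a genuine obstacle; the proposition is a routine verification recorded for the reader's convenience, exactly as the surrounding text indicates, and I expect the write-up to be short.
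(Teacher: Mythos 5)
Your proof is correct. It is worth noting that the paper does not actually prove this proposition: it records the identities as routine matrix-calculus facts and defers to the matrix cookbook \cite{petersen2008matrix}, so your self-contained derivation---entrywise differentiation with constant coefficients for \eqref{eq:d-conj} and \eqref{eq:d-linear}, and Jacobi's formula for \eqref{eq:d-logdet} via the expansion $\det(I+hM)=1+h\,\tr(M)+O(h^2)$ justified by the permutation sum---supplies exactly the details the paper leaves to a reference, and all steps are sound. The only point worth flagging is that taking logarithms in your argument for \eqref{eq:d-logdet} requires $\det(X(t))>0$, not mere invertibility as the statement literally assumes; this is harmless in context, since the proposition is applied only along curves $\zeta(t)=X+tQ$ with $X$ positive definite and $t$ near $0$, where positivity of the determinant holds by continuity.
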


\begin{proof}[Proof of~\cref{thm:new_formulation}]
We start by showing that $F_{\B,\p}(X)$ is geodesically concave.
The feasibility of $(\B,\p)$ implies $B_j X B_j^\top$ is a strictly positive linear map for each $j\in[m]$ (\cref{lem:feasible-plm}).
Thus,~\cref{thm:log-lm-gc} yields that for any $X,Y\in\Sn^{n}_{++}$,
\begin{equ}[eq:g-convex-0]
\log\det(B_j (X\#Y) B_j^\top) \leq \frac{1}{2}\log\det(B_j X B_j^\top) + \frac{1}{2}\log\det(B_j Y B_j^\top).
\end{equ}
Combining this with the geodesic linearity of $\log\det(X)$ (\cref{thm:det-linear}), we obtain
\begin{equs}
\log\det(X\#Y) - \sum_{j\in[m]} p_j \log\det(B_j (X\#Y) B_j^\top) \geq &
\frac{1}{2}\left(\log\det(X) - \sum_{j\in[m]} p_j \log\det(B_j X B_j^\top)\right)\\
&\qquad+
\frac{1}{2}\left(\log\det(Y) - \sum_{j\in[m]} p_j \log\det(B_j Y B_j^\top)\right).
\end{equs}
Equivalently,
\begin{equ}
F_{\B,\p}(X\#Y)\geq \frac{1}{2}F_{\B,\p}(X)+\frac{1}{2}F_{\B,\p}(Y).
\end{equ}
Therefore, $F_{\B,\p}(X)$ is geodesically concave.

Now, we can show the second part of the theorem.
The geodesic concavity of $F_{\B,\p}$ implies any local maximum of $F_{\B,\p}$ is a global maximum of $F_{\B,\p}$.
A local maximum of $F_{\B,\p}$ is achieved at $X$ if it is a critical point of $F_{\B,\p}$.
If $X$ is a critical point of $F_{\B,\p}$, then for any symmetric matrix $Q$, the directional derivative of $F_{\B,p}$ at $X$ in the direction of $Q$ should be 0.
In other words, if $\zeta(t):=X+tQ$ and $f(t):=F_{\B,\p}(\zeta(t))$, then $\left.\frac{df}{dt}\right\rvert_{t=0}$ should be 0 for any $Q$.
Let us compute $\frac{df}{d t}$,
\begin{equs}
\frac{df}{dt}
	\stackrel{\eqref{eq:F}}{=}& \frac{d}{dt}\log\det(\zeta(t)) -\sum_{j\in[m]} p_j \frac{d}{dt}\log\det(B_j \zeta(t) B_j^\top)\\
	\stackrel{\eqref{eq:d-logdet}}{=}& \tr\left(\zeta(t)^{-1}\frac{d \zeta(t)}{dt}\right) - \sum_{j\in[m]} p_j \tr\left((B_j \zeta(t) B_j^\top)^{-1} \frac{d B_j \zeta(t) B_j^\top}{dt}\right)\\
	\stackrel{\eqref{eq:d-conj}}{=}& \tr\left(\zeta(t)^{-1}\frac{d \zeta(t)}{dt}\right) - \sum_{j\in[m]} p_j \tr\left((B_j \zeta(t) B_j^\top)^{-1} B_j\frac{d  \zeta(t)}{dt}B_j^\top\right)\\
	\stackrel{\eqref{eq:d-linear}}{=}& \tr(\zeta(t)^{-1}Q) - \sum_{j\in[m]} p_j\tr((B_j\zeta(t) B_j^\top)^{-1} B_j Q B_j^\top).
\end{equs}
Hence, the directional derivative of $F_{\B,\p}(X)$ in the direction of $Q$, $\left.\frac{df}{dt}\right\rvert_{t=0}$ is
\begin{equ}[eq:d-derivative]
\tr(X^{-1}Q)-\sum_{j\in[m]} p_j\tr((B_j X B_j^\top)^{-1} B_j Q B_j^\top).
\end{equ}
If directional derivates of $F_{\B,\p}$ vanish at $X$, then~\eqref{eq:d-derivative} should be 0 for any symmetric matrix $Q$.
Consequently,
\begin{equ}
\tr(Q[X^{-1}-\sum_{j\in[m]} p_j B_j^\top (B_j X B_j^\top)^{-1} B_j]) = 0.
\end{equ} 
This observation leads to
\begin{equ}[eq:critical-point]
X^{-1} = \sum_{j\in[m]} p_j B_j^\top (B_j X B_j^\top)^{-1} B_j.
\end{equ}
If $(\B,\p)$ is simple, then there exists an input $A^\star=(A_j)_{j\in[m]}$ such that $A_j^{-1} = B_j M^{-1} B_j^\top$ where $M=\sum_{j\in[m]} p_j B_j^\top A_j B_j$ by~\cref{thm:EquivalencetoGeometric}.
Consequently, $M$ satisfies
\begin{equ}[eq:identity]
M = \sum_{j\in[m]} p_j B_j^\top (B_j M^{-1} B_j^\top)^{-1} B_j.
\end{equ}
If $X^\star:=M^{-1}$, then $X^\star$ satisfies~\eqref{eq:critical-point} due to~\eqref{eq:identity}.
Thus, $X^\star$ is a critical point of $F_{\B,\p}(X)$ and $F_{\B,\p}$ attains its maximal value at $X^\star$.
Furthermore, the maximizer $A^\star$ of $\BL(\B,\p;A)$ is equal to $((B_j X^\star B_j^\top)^{-1})_{j\in[m]}$.
Finally,
\begin{equs}
F_{\B,\p}(X^\star) 
	\stackrel{\eqref{eq:F}}{=}& \log\det(X^\star) - \sum_{j\in[m]} p_j \log\det(B_j X^\star B_j^\top) \\
	=& \log\det\left(\left(\sum_{j\in[m]}p_j B_j^\top(B_jX^\star B_j^\top)^{-1}B_j\right)^{-1}\right) - \sum_{j\in[m]} p_j \log\det(B_j X^\star B_j^\top) \\
	=& \log\det\left(\left(\sum_{j\in[m]} p_j B_j^\top A_j B_j\right)^{-1}\right) - \sum_{j\in[m]} p_j\log\det(A_j^{-1})\\
	=& \sum_{j\in[m]} p_j \log\det(A_j)-\log\det\left(\sum_{j\in[m]} p_j B_j^\top A_j B_j\right).
\end{equs}
Therefore, $\BL(\B,\p;A^\star) = \exp(\frac{1}{2} F_{\B,\p}(X^\star))$.

\end{proof}

\section*{Acknowledgment}
We thank Ankit Garg for suggesting the possibility of  geodesic log-concavity of Lieb's  formulation.

\bibliographystyle{alpha}
\bibliography{refs} 

\end{document}